\newtheorem{theorem}{Theorem}
\newtheorem{proposition}[theorem]{Proposition}
\def\IR{{\rm I\!R}} 
\def\IC{\hbox{\rm C\kern-.43em
       \vrule depth 0ex height 1.4ex width .05em\kern.41em}}
\def\IQ{\hbox{\rm Q\kern-.43em
       \vrule depth 0ex height 1.4ex width .05em\kern.41em}}
   \def\bd{{\bf d}}
   \def\bF{{\bf F}}
\def\bx{{\bf x}} \def\by{{\bf y}}  
   \def\cS{{\mathcal Sinks}}
  \def\cI{{\mathcal I}}
\def\bS{{\bf S}}
\def\bA{{\bf A}}
\def\diag{{\rm diag}\,}
\def\qed{\hfill\vbox{\hrule width 6 pt\hbox{\vrule height 6 pt width 6 
pt}}}
\begin{document}

\bibliographystyle{unsrtnat} 

\title[Evolution of dispersal in periodic environments]{Evolution of unconditional 
dispersal in periodic environments} 
\author[S. J. Schreiber]{Sebastian J. Schreiber}
\address{Department of Evolution and Ecology and Center for Population Biology, University of California, Davis, CA 95616}
\email{sschreiber@ucdavis.edu}
\author[C.K. Li]{Chi-Kwong Li}
\address{Department of Mathematics, College of William and Mary, Williamsburg, VA}
\email{ckli@math.wm.edu}
\maketitle

\begin{abstract}
\linenumbers
\baselineskip 20pt
Organisms modulate their fitness in heterogeneous  environments by dispersing. Prior work shows that there is selection against ``unconditional'' dispersal in spatially heterogeneous environments. ``Unconditional'' means individuals disperse at a rate independent of their location. We prove that if within-patch fitness varies spatially and between two values temporally, then there is selection for unconditional dispersal: any evolutionarily stable strategy (ESS) or evolutionarily stable coalition (ESC) includes a dispersive phenotype. Moreover, at this ESS or ESC, there is at least one sink patch (i.e. geometric mean of fitness less than one) and no sources patches (i.e. geometric mean of fitness greater than one). These results coupled with simulations suggest that spatial-temporal heterogeneity due to abiotic forcing result in either an ESS with a dispersive phenotype or an ESC with sedentary and dispersive phenotypes. In contrast, spatial-temporal heterogeneity due to biotic interactions can select for higher dispersal rates that ultimately spatially synchronize population dynamics. 
\end{abstract}

\section{Introduction}
\linenumbers
\baselineskip 20pt
Plants and animals often live in landscapes where environmental
conditions vary in space and time. These
environmental conditions may include abiotic factors such as light,
space, and nutrient availability or biotic factors such as prey,
competitors, and predators. Since the fecundity and survivorship of
an individual depends on these factors, an organism may decrease or
increase its fitness by dispersing across the environment.  Understanding how
natural selection acts on 
dispersal in heterogeneous environments has been the focus of much theoretical and empirical 
work~\cite{fretwell-lucas-70,hamilton-may-77,hastings-83,levin-etal-84, holt-85,pulliam-88,cohen-levin-91, mcpeek-holt-92, holt-96, doebeli-ruxton-97,diffendorfer-98,dockery-etal-98, parvinen-99, mathias-etal-01,friedenberg-03,hutson-etal-03,siap-06, cantrell-etal-06,cantrell-etal-07,dercole-etal-07, amnat-09a}. 

In spatially heterogeneous environments, the general consensus is that there is 
selection against unconditional dispersal~\cite{hastings-83,mcpeek-holt-92,dockery-etal-98,parvinen-99,siap-06}. Here, unconditional refers to the assumption that 
individuals disperse at a rate independent of their location. Using reaction-diffusion equations, Dockery et al.~\cite{dockery-etal-98} proved that for two 
competing populations only differing in their diffusion constant, the 
population with the larger diffusion constant is excluded. In \cite{parvinen-99,siap-06}, 
similar results were proven for populations with non-overlapping generations 
living in a patchy environment.  Alternatively, in temporally but not spatially 
heterogenous environments, Hutson et al.~\cite{hutson-etal-03} proved that 
dispersal rates are a selectively neutral trait for reaction-diffusion models 
and, thereby, confirmed the numerical observations of McPeek and 
Holt~\cite{mcpeek-holt-92} for discrete-time, two-patch models.  These results 
imply that the slightest cost of dispersal would result in selection against 
dispersal in purely temporally heterogenous environments. 

When there is a mixture of spatial and temporal heterogeneity, the interaction 
between competing dispersal phenotypes becomes more subtle. Numerically 
simulating discrete-time, two-patch models, McPeek and Holt~\cite{mcpeek-holt-92}, Parvinen~\cite{parvinen-99}, and Mathias et al. ~\cite{mathias-etal-01} found that more dispersive phenotypes could displace more sedentary  phenotypes for certain forms of spatial-temporal heterogeneity, while evolutionarily stable coalitions of sedentary and dispersal phenotypes are possible for other forms  of spatial-temporal heterogeneity.  Hutson et al. ~\cite{hutson-etal-03} proved that similar phenomena could occur for reaction diffusion equations. However,  analytical criteria distinguishing these outcomes remain elusive. 

In this article, we consider the evolution of dispersal for a general class of 
multi-patch difference equations varying periodically in time. This periodic 
variation can be either due to biotic interactions or abiotic forcing.  Our 
main goals are to analytically identify potential evolutionarily stable strategies or 
coalitions for dispersal, characterize the spatial-temporal patterns 
of fitness generated by populations playing these dispersal strategies, and use our results to compare evolutionary outcomes for oscillations due to abiotic forcing versus oscillations due to biotic interactions.

\section{Models and Assumptions}

To understand the formation of evolutionarily stable 
coalitions of dispersive phenotypes, we consider a population consisting of 
$m$ phenotypes dispersing in an environment consisting of $n$ patches. Let 
$x_i^j(t)$ denote the abundance of phenotype $i$ in patch $j$. The fitness of an individual in patch $j$ is assumed to be of 
the form $f^j(t,\sum_{i=1}^m x_i^j(t))$. In particular, this assumption implies that phenotypes only differ demographically in their propensity to disperse. Moreover, we assume  that  $f^j(t,\cdot)$ is of period $p$. This periodicity may arise from exogenous forcing or biological interactions (e.g. over compensating density 
dependence or predator-prey interactions). While we do not explicitly model 
interactions with other species, our formulation is sufficiently general to be 
viewed as the dynamics of a particular species embedded within a web of interacting species. 

We assume that the fraction of phenotype $i$ dispersing from any given patch is $d_i$. Of the individuals dispersing from patch $j$, a fraction $S_{kj}$ go to patch $k$. We call $S=(S_{kj})$ the dispersal matrix and it characterizes how dispersing individuals are redistributed across the landscape.  Under these assumptions, the interacting phenotypes exhibit the following population dynamics: 
\begin{eqnarray*}
x_i^j(t+1)=(1-d_i)f^j\left(t,\sum_{l=1}^m x_l^j(t)\right) x_i^j(t)+
d_i \sum_{k=1}^n S_{jk} f^k\left(t,\sum_{l=1}^m x_l^k(t)\right) x_i^k(t)
\end{eqnarray*}
To express this model more succinctly, let $\bx^j=(x_1^j,\dots,x_m^j)$ be the vector of
abundances of the $m$ phenotypes in patch $j$, $\bx_i=(x_i^1,\dots,x_i^n)^T$ (where $^T$ denotes transpose) be the vector of abundances of strategy $i$ across the $n$ patches,  and $\|\bx^j\|=\sum_{i=1}^m x_i^j$ denote the total abundance of individuals  in patch $j$.  Let $\bx$ be the matrix with entries $x_i^j$, $\bF(t,\bx)$ be a diagonal matrix whose $j$-th diagonal element is $f^j(t,\|\bx^j\|)$, and 
$\bS(d_i)=(1-d_i)I+d_iS$
 where $I$ is the identity matrix. With this notation, the model is represented more succinctly as   
\begin{equation}\label{eq:model}
\bx_i(t+1)= \bS(d_i) \bF(t,\bx(t)) \bx_i(t)\qquad i=1,\dots,m.
\end{equation}

About this model, we make three assumptions throughout this manuscript. 
\begin{description}
\item[A1] The dispersal matrix $\bS$ is irreducible and column stochastic (i.e. $S$ has nonnegative entries, and the entries of each column sum up to one).
\item[A2] \eqref{eq:model} has a positive period-$p$ point 
$\hat{\bx}(1),\dots,\hat{\bx}(p)$ i.e. $\|\hat \bx_i(t)\|>0$ and $\|\hat \bx^j(t)\|>0$ for all $i$, $j$, $t$, and 
\begin{equation}\label{eq:periodic}
\prod_{t=1}^p \bS(d_i)\bF(t,\hat\bx(t)) \hat \bx_i(1)=\hat\bx_i(1)\mbox{ for all }i.
\end{equation}
\item[A3] $\bF(t,\bx)$ is continuous in $\bx$.
\end{description}
Assumption \textbf{A1} ensures that individuals or their decedents can move 
from any patch to any other patch after sufficiently many generations and there is no direct cost to dispersal. Assumption \textbf{A2} implies that the phenotypes are coexisting in a periodic fashion and occupying all the patches. Assumption \textbf{A3} is a basic regularity assumption met by most models. 

\section{Main results}

We are primarily interested in understanding when a periodically fluctuating collection of phenotypes can not be invaded by any other phenotype, and what are the spatial-temporal patterns of fitness at these potential evolutionary end states. To state our main results  precisely, we need two sets of definitions. 

\subsection*{Invasion rates, Nash equilibria and evolutionary stability.} 
Let $\bd=(d_1,\dots,d_m)$ the coalition of strategies played by the resident population. If we added a ``mutant'' phenotype with dispersal 
rate $\tilde d$ into the population and this mutant population $\by=(y^1,\dots,y^n)$ is very small, then the resident's population dynamics are initially barely influenced by the mutant's population dynamics and the mutant's dynamics are approximately given by a linear model 
\[
\by(t+1)=\bS(\tilde d) F(t,\bx(t))\by(t).
\]
By standard linearization theorems (see, e.g., \cite{katok-hasselblatt-95}), this approximation is valid when the size of the mutant population is small and the periodic point in \textbf{A2} is hyperbolic. 

The initial fate of the mutant population depends on the  \emph{invasion rate of strategy $\tilde d$ against the resident population playing strategies $\bd$}:
\[
\cI(\bd;\tilde d)=\varrho\left( \prod_{t=1}^p \bS(\tilde d)\bF(t,\hat\bx(t))
\right)  ^{1/p}
\]
where $\bx(t)$ has period $p$ and $\varrho(\mathbf{A})$ corresponds to the largest eigenvalue for a non-negative 
matrix $\mathbf{A}$. If the invasion rate $\cI(\bd;\tilde d)$ is greater than one, then the mutant population grows when its size is small. The ultimate fate of the mutant and resident after the mutant increases depends on the details of full non-linear model of the resident and invader dynamics. In particular, following an invasion the asymptotic dynamics in general may no longer be periodic (i.e. satisfy \textbf{A2}). There are  many cases where post-invasion dynamics will remain periodic (e.g. periodically forced competitive systems as discussed in section 4.1 or when there is attractor inheritance for a sufficiently small mutation~\cite{geritz-etal-02}).   

If the invasion rate $\cI(\bd;\tilde d)$ is less than one, then the mutant population declines exponentially when rare and it can not invade.  Finally, if $\cI(\bd;\tilde d)=1$, then a mutant may increase or decrease when rare depending on the details of the nonlinearities of the full model. However, if it increases, then it does so at a subexponential rate and, therefore, may be highly vulnerable to stochastic extinction. An important consequence of our assumption \textbf{A2} is that the invasion rate of mutants with the same strategy as a resident equals one i.e. 
$\cI(\bd;\tilde d)=1$ whenever $\tilde d=d_i$ for some $i$.

Using the invasion rates of mutant strategies, we can define several concepts associated with evolutionary stability. A coalition of strategies $\bd$ with $m>1$ 
is  a \emph{mixed Nash equilibrium} provided that 
\begin{equation}\label{eq:nash}
\cI(\bd;\tilde d)\le 1 \mbox{ for all } \tilde d \in [0,1]
\end{equation}
In other words, a mixed Nash equilibrium is a set of strategies in which 
all mutant strategies are unlikely to invade due to vulnerability to stochastic extinction. When $m=1$, a strategy satisfying \eqref{eq:nash} is called simply a \emph{Nash 
equilibrium}. Under the stronger assumption that rare mutants decline exponentially to extinction (i.e. $
\cI(\bd;\tilde d)<1 \mbox{ for all } \tilde d \notin\{d_1,\dots,d_m\}
$), $\bd$ is an \emph{evolutionarily stable coalition (ESC) if $m>1$} or an \emph{evolutionarily stable strategy (ESS) if  $m=1$}~\cite{cressman-03}. More generally, every ESS (respectively, ESC) is a (mixed) Nash equilibirum. 

\subsection*{Sources, sinks, and balanced patches.} 

Pulliam~\cite{pulliam-88} introduced the notion of sources and sink patches for a population at equilibrium. In source patches  birth rates exceed death rates, while in sink patches death rates exceed birth rates. Here, we extend Pulliam's definition to population exhibiting periodic fluctuations in abundance. A patch is a source if births exceed deaths ``on average'' across years, while a patch is a sink if deaths exceed births ``on average'' across years. For fitnesses varying in time, the appropriate ``average''  is the geometric mean:
\[
\bar f_j = \left(\prod_{t=1}^p f^j(t,\|\hat\bx^j(t)\|)\right)^{1/p}.
\]
If $\bar f_j<1$, then patch $j$ is a \emph{sink}.  If $\bar f_j>1$, then  patch  $j$ is a \emph{source}. Following McPeek and Holt~\cite{mcpeek-holt-92}, we say patch $j$  is \emph{balanced}  if $\bar f_j=1$. Individuals remaining in a balanced patch, on average exactly replace themselves.

\subsection*{Main Results.} 

We have two main results. Our first result implies that if there is spatial heterogeneity in the within-patch fitnesses under equilibrium conditions, then no coalition of distinctive phenotypes can coexist, there is selection for slower dispersers, and the landscape supports source and sink patches. In particular, this result implies that for populations at equilibrium and playing a Nash equilibrium, all patches must be balanced. This result follows from \cite{siap-06} and generalizes earlier work of Hasings~\cite{hastings-83} and Parvinen~\cite{parvinen-99} by allowing for non-diffusive patterns of dispersal (i.e. $S$ need not be symmetric). 

\begin{proposition}\label{prop:one} If $p=1$ and $\bF(1,\hat \bx(1))$ is not 
scalar (i.e. there is spatial heterogeneity), then all the $d_i$ are equal and positive, $\cI(\bd;\tilde d)>1$ for all $0\le  \tilde d< d_1$, and  at least one patch is a sink and at least one patch is a source. 
\end{proposition}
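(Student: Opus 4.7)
}

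Since $p=1$, writing $A=\bF(1,\hat\bx(1))$, the periodic condition \eqref{eq:periodic} becomes the eigenvector equation $\bS(d_i)A\,\hat\bx_i(1)=\hat\bx_i(1)$, and the invasion rate simplifies to $\cI(\bd;\tilde d)=\varrho(\bS(\tilde d)A)$. Here $A$ is a positive diagonal matrix (by \textbf{A2}), and by hypothesis its diagonal entries $A_{jj}=f^j(1,\|\hat\bx^j(1)\|)$ are not all equal. The whole proof reduces to analyzing the scalar function $\lambda(d):=\varrho(\bS(d)A)$ for $d\in[0,1]$. By \textbf{A2}, $\lambda(d_i)=1$ for every resident strategy $d_i$.

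The plan is to first invoke the main spectral result of \cite{siap-06}: under \textbf{A1} with $A$ a positive diagonal matrix having at least two distinct diagonal entries, the map $d\mapsto\lambda(d)$ is \emph{strictly decreasing} on $[0,1]$. Taking this as given, injectivity of $\lambda$ together with $\lambda(d_i)=1$ for all $i$ forces $d_1=d_2=\cdots=d_m$; call the common value $d_1$. Strict monotonicity then immediately yields $\cI(\bd;\tilde d)=\lambda(\tilde d)>\lambda(d_1)=1$ for every $0\le \tilde d<d_1$, proving the second conclusion. To rule out $d_1=0$: if $d_1=0$ then $\bS(d_1)=I$ and the fixed-point equation becomes $A_{jj}\,\hat x_i^j(1)=\hat x_i^j(1)$ for every $i,j$. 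Since $\|\hat\bx^j(1)\|>0$ for every $j$ by \textbf{A2}, some phenotype has $\hat x_i^j(1)>0$ in each patch, so $A_{jj}=1$ for all $j$, contradicting the non-scalar hypothesis. Hence $d_1>0$.

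For the final conclusion about sources and sinks, I would exploit that $\bS(d_1)$ is column stochastic, so $\mathbf{1}^{T}\bS(d_1)A=\mathbf{1}^{T}A$. Because $d_1>0$ and $S$ is irreducible, $\bS(d_1)A$ is irreducible with Perron eigenvalue $1$ and strictly positive Perron eigenvector $\hat\bx_i(1)$. Applying $\mathbf{1}^{T}$ to $\bS(d_1)A\,\hat\bx_i(1)=\hat\bx_i(1)$ gives
\begin{equation*}
\sum_{j=1}^{n}\hat x_i^j(1)=\sum_{j=1}^{n}A_{jj}\,\hat x_i^j(1),\qquad\text{i.e.,}\qquad \sum_{j=1}^{n}(A_{jj}-1)\,\hat x_i^j(1)=0.
\end{equation*}
Since every $\hat x_i^j(1)>0$, the signed sum can vanish only if the coefficients $A_{jj}-1$ are not all of one sign and not all zero; the non-scalar hypothesis excludes the all-zero case, so there must exist $j$ with $A_{jj}>1$ (a source, since $p=1$ makes $\bar f_j=A_{jj}$) and $j'$ with $A_{j'j'}<1$ (a sink).

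The main obstacle is the strict monotonicity of $\lambda(d)$, which is precisely the spectral content extracted from \cite{siap-06}; everything else reduces to elementary Perron--Frobenius bookkeeping together with the column-stochasticity of $\bS(d)$. A natural sanity check on the monotonicity is that $\lambda(0)=\max_{j}A_{jj}$ while $\lambda(1)=\varrho(SA)<\max_{j}A_{jj}$ (the latter because an irreducible nonnegative matrix attains its maximal column sum as Perron root only when all column sums coincide), so the endpoints already display the decrease; the nontrivial part of \cite{siap-06} is that the decrease is strict in between.
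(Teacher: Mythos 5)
Your proposal is correct and follows essentially the same route as the paper: both rest on the strict monotonicity of $d\mapsto\varrho(\bS(d)\bA)$ from Theorem~3.1 of \cite{siap-06}, deduce equality and positivity of the $d_i$ from $\varrho(\bS(d_i)\bA)=1$, and use column stochasticity of $\bS(d_1)$ for the source/sink conclusion. The only (harmless) difference is in that last step, where the paper invokes the bound $\min_j A_{jj}<\varrho(\bS(d_1)\bA)<\max_j A_{jj}$ for an irreducible nonnegative matrix with unequal column sums, while you obtain the same dichotomy by pairing the left eigenvector $\mathbf{1}^T$ with the positive right eigenvector $\hat\bx_i(1)$ to get $\sum_j(A_{jj}-1)\hat x_i^j(1)=0$.
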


\begin{proof}
Let $\bA$ be the diagonal matrix $\bF(1,\hat\bx(1))$ with diagonal entries $A_{jj}=f^j(1,\|\hat \bx^j(1)\|)$.
Since $\bA$ is not scalar, Theorem~3.1 in \cite{siap-06} implies that  $\varrho(\bS(d)\bA)$ is a strictly decreasing function of $d\in [0,1]$. Assumption \textbf{A2} implies that $\varrho(\bS(d_i)\bA)=1$ for all $i$. Hence, all of the $d_i$ are equal and $\cI(\bd;\tilde d)>1$ for all $0\le \tilde d <d_1$. The $d_i$ can not equal zero as this would imply that $\bF(1,\hat \bx_i(1))$ is the identity matrix contradicting the assumption that it is non-scalar. Since $\bS(d_1)$ is column stochastic and $\bA$ is non-scalar, $\max_i A_{ii}>\varrho(\bS(d_1)\bA)=1>\min_i A_{ii}$. Hence, there is a source patch and a sink patch. 

\end{proof}

Our second result concerns period $2$ environments. In contrast to populations at equilibrium, this result implies  that any ESS or ESC (or more generally, Nash equilibrium) includes a dispersive phenotype, supports at least one sink patch and supports no source patches. We are able to prove this result only under the restriction that the dispersal matrix $S$ is \emph{diagonally similar to a symmetric matrix}: there exists an invertible diagonal matrix $D$ such that $DSD^{-1}$ is a symmetric matrix. This allows for 
a diversity of movement patterns including diffusive movement (i.e. symmetric 
$S$) and any form of local movement along a one-dimensional gradient (i.e. tridiagonal $S$). A proof is given in Appendix A. It is an open problem whether this result extends to all irreducible stochastic matrices $S$.

\begin{theorem}\label{thm:main} If $p=2$, $S$ is diagonally similar to a 
symmetric matrix, $\bF(2,\hat\bx(2))\neq \bF(1,\hat\bx(1))$ (i.e. there is temporal heterogeneity), and 
$\bF(t,\hat\bx(t))$ is non-scalar for some $t$ (i.e. there is some spatial heterogeneity), then 
\begin{enumerate}
\item[(i)] If $\max_id_i=0$,  then $\cI(\bd;\tilde d)>1$ for all $0<\tilde d\le 1$. 
\item[(ii)] If $\bd$ is a (possibly mixed) Nash equilibrium, then $\max_i d_i>0$, the set 
$\cS\subset\{1,\dots,n\}$ of sink patches is non-empty, and the remaining 
patches $\{1,\dots,n\}\setminus \cS$ are balanced.
\end{enumerate}
\end{theorem}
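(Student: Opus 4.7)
My plan is to recognize $\cI(\bd;\tilde d)$ as a spectral (operator) norm whose non-normality, forced by spatial heterogeneity, yields the required strict inequalities. Set $\bar S=DSD^{-1}$ (symmetric), $\bar\bS(\tilde d)=(1-\tilde d)I+\tilde d\bar S$, $\bA=\bF(1,\hat\bx(1))$, and $\bB=\bF(2,\hat\bx(2))$. Because $D$ is diagonal it fixes $\bA,\bB$ under conjugation, so
\[
\cI(\bd;\tilde d)^2=\varrho(\bS(\tilde d)\bB\bS(\tilde d)\bA)=\varrho(\bar\bS(\tilde d)\bB\bar\bS(\tilde d)\bA).
\]
With $M(\tilde d):=\bB^{1/2}\bar\bS(\tilde d)\bA^{1/2}$ and using symmetry of $\bar\bS(\tilde d)$, we have $M^TM=\bA^{1/2}\bar\bS\bB\bar\bS\bA^{1/2}$, which has the same spectrum as $\bar\bS\bB\bar\bS\bA$ by cyclic invariance. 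Hence $\cI(\bd;\tilde d)=\|M(\tilde d)\|$, the spectral norm.

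For part~(i), $\max_i d_i=0$ combined with \textbf{A2} forces each diagonal entry of $\bA\bB$ to equal $1$, so $\bA\bB=I$ and, by the spatial-heterogeneity hypothesis, both $\bA$ and $\bB$ are non-scalar. Then $M(\tilde d)=N(\tilde d):=\bA^{-1/2}\bar\bS(\tilde d)\bA^{1/2}$, similar via the non-unitary $\bA^{-1/2}$ to the symmetric $\bar\bS(\tilde d)$. For $\tilde d>0$, irreducibility of $\bar S$ makes $1$ a simple Perron eigenvalue of $N(\tilde d)$, with right and left Perron eigenvectors $v=\bA^{-1/2}\bar v$ and $w=\bA^{1/2}\bar v$, where $\bar v>0$ is the Perron eigenvector of $\bar\bS(\tilde d)$. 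Since $\bar v$ has full support and $\bA$ is non-scalar, $v$ is not parallel to $w$. The key lemma I would then prove is that any matrix $N$ with simple Perron eigenvalue $\rho$ and non-parallel right and left Perron eigenvectors satisfies $\|N\|>\rho$: were $N^T v$ a scalar multiple of $v$, taking inner product with $v$ would force that scalar to be $\rho$, making $v$ a left-Perron eigenvector and hence parallel to $w$; so there exists $\delta\perp v$ with $\langle N^T v,\delta\rangle>0$, and for $u=v+\epsilon\delta$,
\[
\|Nu\|^2-\rho^2\|u\|^2=2\epsilon\rho\langle N^T v,\delta\rangle+O(\epsilon^2)>0
\]
for small $\epsilon>0$. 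Applied to $N(\tilde d)$, this gives $\cI(\bd;\tilde d)>1$ for every $\tilde d\in(0,1]$, completing~(i).

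For part~(ii), if $\max_i d_i=0$ then part~(i) contradicts the Nash condition, so $\max_i d_i>0$. The Nash condition at $\tilde d=0$ yields $\varrho(\bA\bB)=\max_j \bar f_j^2\le 1$, so no patch is a source. If additionally no patch were a sink, then $\bar f_j=1$ for every $j$, forcing $\bA\bB=I$; the argument of part~(i) (which used only $\bA\bB=I$ and non-scalar $\bA$, not the assumption $d_i=0$) applied to any resident $d_i>0$ would then give $\cI(\bd;d_i)>1$, contradicting Nash. Hence at least one patch is a sink. The main obstacle is the key lemma $\|N\|>\rho$ under non-parallel Perron eigenvectors; the rest is routine manipulation around the symmetrization and the operator-norm identity.
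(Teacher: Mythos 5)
Your proof is correct, and although it shares the paper's central device --- conjugating by the diagonal matrix that symmetrizes $S$ and then by a square root of the diagonal fitness matrix so that the invasion rate becomes an operator norm, $\cI(\bd;\tilde d)=\|M(\tilde d)\|$ --- your key lemma is genuinely different. The paper routes everything through its Theorem~\ref{thm:1}: the spectral radius of $D[(1-t)I+tS]D^{-1}[(1-t)I+tS]$ is monotone in $t$ and constant exactly when $D$ and $S$ commute, which is in turn deduced from a monotonicity theorem for $t\mapsto\|I+tA\|$ (Theorem~\ref{thm:2}). You instead prove the pointwise statement that a matrix with a simple positive eigenvalue $\rho$ whose right and left eigenvectors are not parallel has operator norm strictly greater than $\rho$, and you verify non-parallelism directly from non-scalarity of $\bF(1,\hat\bx(1))$; the first-order perturbation $u=v+\epsilon\delta$ with $\delta\perp v$ is a clean and correct way to get the strict inequality. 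This is more elementary and fully sufficient for Theorem~\ref{thm:main} (your part (ii) --- no sources from the Nash condition at $\tilde d=0$, at least one sink by reducing the all-balanced case to the part (i) computation --- matches the paper's logic, with the minor simplification that the source case contradicts Nash already at $\tilde d=0$ without the paper's continuity step), but it buys less than Theorem~\ref{thm:1}: it does not give monotonicity of $\tilde d\mapsto\cI(\bd;\tilde d)$, which the paper exploits in its applications to conclude $\cI(d_1;\tilde d)>1$ for \emph{all} $\tilde d>d_1$ along out-of-phase cycles. Two small points to tidy: normalize the symmetrizing matrix $D$ to have positive diagonal (possible because the symmetry condition only involves $d_i^2$) so that your $\bar S$ is a nonnegative irreducible symmetric matrix and Perron--Frobenius gives the positive eigenvector $\bar v$ of $\bar\bS(\tilde d)$; and the square roots of the fitness matrices require positive within-patch fitnesses along the periodic orbit, which is harmless here since your lemma is only ever invoked when $\bF(2,\hat\bx(2))\bF(1,\hat\bx(1))=I$.
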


\section{Applications}

To illustrate the utility of our results, we present two applications. The first  application considers populations with compensating density dependence in a periodically forced environment. We use Theorem~\ref{thm:main} and its proof to determine under what conditions there is an ESC of sedentary and dispersive phenotypes. The second application considers populations with sufficiently overcompensatory density dependence to create oscillatory dynamics. We use the proof of Theorem~\ref{thm:main} to illustrate how the evolution of higher dispersal rates can synchronize initially asynchronous population dynamics.

\subsection{Evolution of dispersal dimorphisms in periodic environments}
We consider competing dispersive phenotypes whose within patch dynamics are given by a periodically forced Beverton-Holt model. For simplicity, we assume that only the 
intrinsic fitness $\lambda_t^j$ of an individual living in patch $j$ varies in 
time and space:
\begin{equation}
f^j(t,\|\bx^j\|)=\frac{\lambda^j_t}{1+a \|\bx^j\|}
\end{equation}
where $a$ measures the intensity of competition within a patch. Also for simplicity, we assume that 
$S_{kj}=1/n$ for all $j,k$. In other words, dispersing individuals are 
uniformly distributed across the landscape.  

When there is only one dispersive phenotype (i.e.  $m=1$),  equation \eqref{eq:model} is a sublinear monotone map (see, e.g., \cite{hirsch-smith-05} for definitions).  Consequently, \cite[Thm. 6.1]{hirsch-smith-05} implies that the fate of the population depends on the linearization of the system at the extinction state. The dominant eigenvalue associated with this linearization is given by \[
\lambda=\varrho\left(\prod_{t=1}^p\bS(d_1)\bF(t,0)\right)^{1/p}.\]
If $\lambda\le 1$, then population goes deterministically toward extinction i.e.  $\bx(t)$ converges to $0$ for all $\bx(0)\ge 0$.  Alternatively, if $\lambda>1$, then the populations increases when rare and ultimately converges to a periodic orbit. More precisely, there exists a periodic orbit, 
$\{\hat \bx(1),\dots,\hat \bx(p)\}$ with $\hat \bx(t)\gg0$ 
for all $t$, such that $\bx(t)$ converges to this periodic orbit whenever 
$\bx(0)\gg 0$.  A sufficient condition ensuring $\lambda>1$  is 
\begin{equation}\label{eq:suff}
\left(\prod_{t=1}^p \lambda^j_t\right)^{1/p} >1 \mbox{ for all }j.
\end{equation}
 In other words, all the patches can support a population  in the absence of immigration. While this condition is stronger than what is necessary, for simplicity, we 
assume that \eqref{eq:suff} holds for the remainder of this section.

\begin{figure}
\begin{center}
\includegraphics[width=5in]{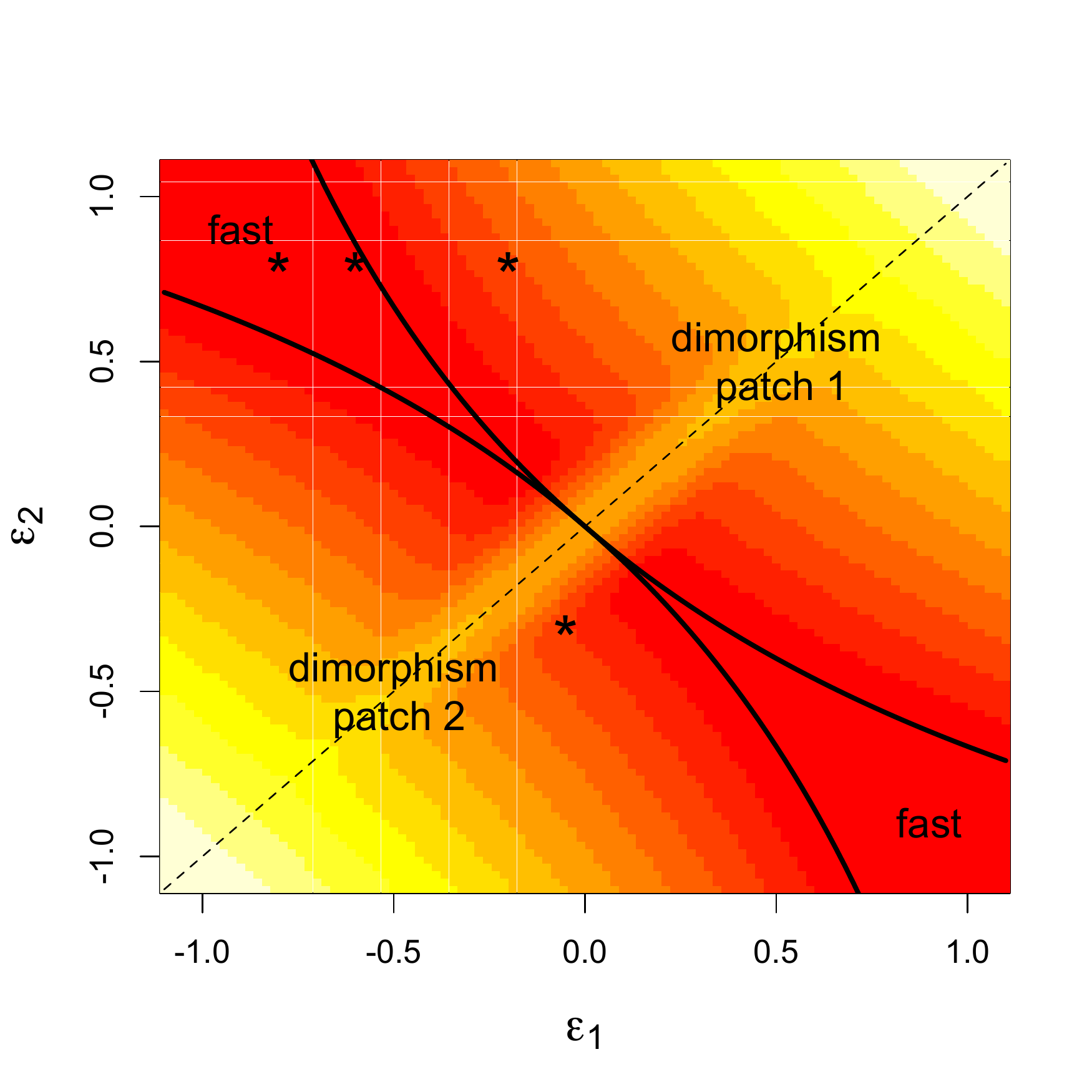}
\end{center}
\caption{Nash equilibria for a Beverton-Holt two patch model with $\lambda_t^1=2+\epsilon_t$ 
and $\lambda_t^2=2-\epsilon_t$. The strategy $d_1=1$ is a local ESS in the regions 
denoted ``fast''. The mixed strategy $\bd=(1,0)$ is an ESC in the regions 
denoted ``dimorphism''. For these regions, the sedentary population only 
resides in the indicated patch. Shading corresponds to the average abundance 
of the sedentary strategy with lighter colors corresponding to higher 
abundance. At the dashed line, all (coalitions of) strategies are a (mixed) Nash 
equilibrium. $*$s refer to parameter values for which pairwise invasibility plots are shown in Fig.~\ref{fig:pip}}\label{fig:2patch} 
\end{figure}

For period two environments where the intrinsic fitness vary in space and 
time, Theorem~\ref{thm:main} implies that  a sedentary strategy (i.e. $d_1=0$) is not a Nash equilibrium as 
it can be invaded by more dispersive phenotypes.  In 
contrast, for a fully dispersive phenotype (i.e. $d_1=1$), there is a periodic point $\hat \bx(t)$ given by 
\[
\hat x^j_1(t)=\frac{\bar\lambda_2\bar\lambda_1-1}{a(1+\bar \lambda_t)}
\]
where $\bar \lambda_t=\frac{1}{n} \sum_{j=1}^n \lambda^j_t$ is the spatial average of the intrinsic fitnesses. Along this 
periodic orbit, a computation reveals that the within-patch fitnesses satisfy 
\[
\prod_{t=1}^2 f^j(t,\|\hat \bx^j(t)\|)= \prod_{t=1}^2 \frac{\lambda^j_t}{\bar 
\lambda_t}. 
\]
Thus, Theorem~\ref{thm:main} implies that a necessary condition for $d_1=1$ 
to be a Nash equilibrium is 
\begin{equation}\label{eq:fast}
\prod_{t=1}^2 \lambda^j_t \le \prod_{t=1}^2 \bar\lambda_t \mbox{ for all }j
\end{equation}
with a strict inequality for at least one $j$. This condition for a Nash equilibrium requires that the geometric mean of the fitness within each patch is no greater than geometric mean of the spatially averaged fitness. 

To illustrate the utility of \eqref{eq:fast}, consider an environment where the fitness in each patch fluctuates between a low value $\lambda_{bad}$ in ``bad'' years and a higher value $\lambda_{good}$ in ``good" years i.e. $\lambda_t^j \in 
\{\lambda_{good},\lambda_{bad}\}$, $\lambda_{bad}<\lambda_{good}$, and $\lambda^j_t\neq \lambda^j_{t+1}$. To ensure that all dispersal phenotypes can persist, we assume that the geometric mean $\sqrt{\lambda_{good}\lambda_{bad}}$  is greater than one.  Provided that there is some spatial asynchrony (i.e. 
$\lambda_t^j \neq \lambda_t^k$ for some $k,j$), the necessary condition \eqref{eq:fast} for a Nash equilibrium of highly dispersive phenotypes simplifies to 
\[
\sqrt{\lambda_{good}\lambda_{bad}}\le  \frac{1}{2}(\lambda_{good}+\lambda_{bad}).
\]
This inequality holds strictly as the geometric mean is less than the arithmetic 
mean. Furthermore, a computation reveals that the geometric mean of fitness within patch $j$ satisfies
\[
\sqrt{\prod_{t=1}^2 f^j(t,\|\hat \bx^j(t)\|)}=\sqrt{ 
\frac{\lambda_{good}\lambda_{bad}}
{\frac{1}{4}(\lambda_{good}+\lambda_{bad})^2}}<1 
\]
for all patches $j$. Hence, Theorem~\ref{thm:1} in Appendix A implies that $\cI(1;\tilde d)<1$ for all $\tilde d \in [0,1)$. Therefore, for this environment, a highly dispersive phenotype ($d_1=1$) always is an ESS  and  all patches are sinks for populations playing this ESS. 

 \begin{figure}
\begin{center}
\begin{tabular}{cc}
\includegraphics[width=2.5in]{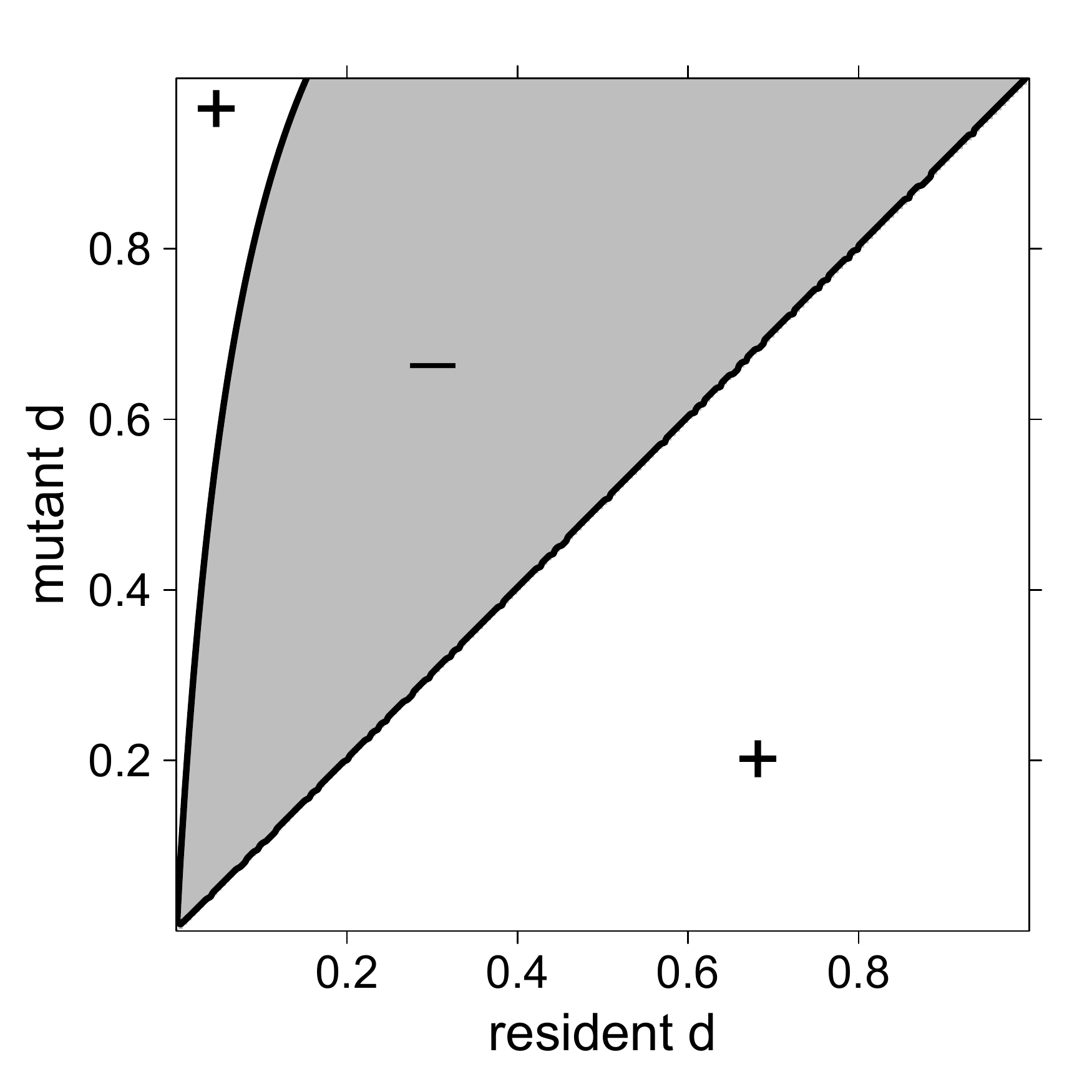}&\includegraphics[width=2.5in]{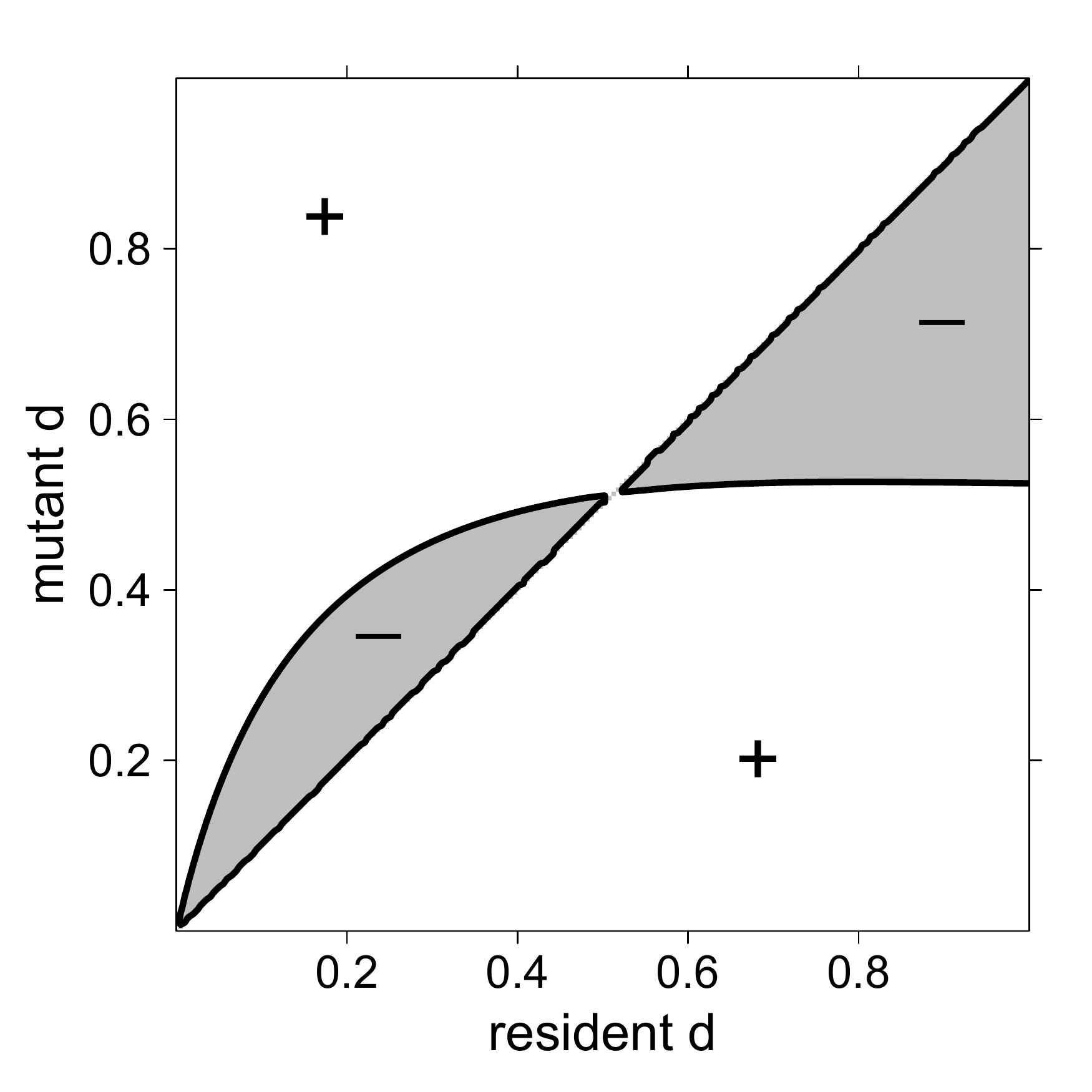}\\
(a) $\epsilon_t=-0.05,-0.3$& (b) $\epsilon_t=-0.2,0.8$\\
\includegraphics[width=2.5in]{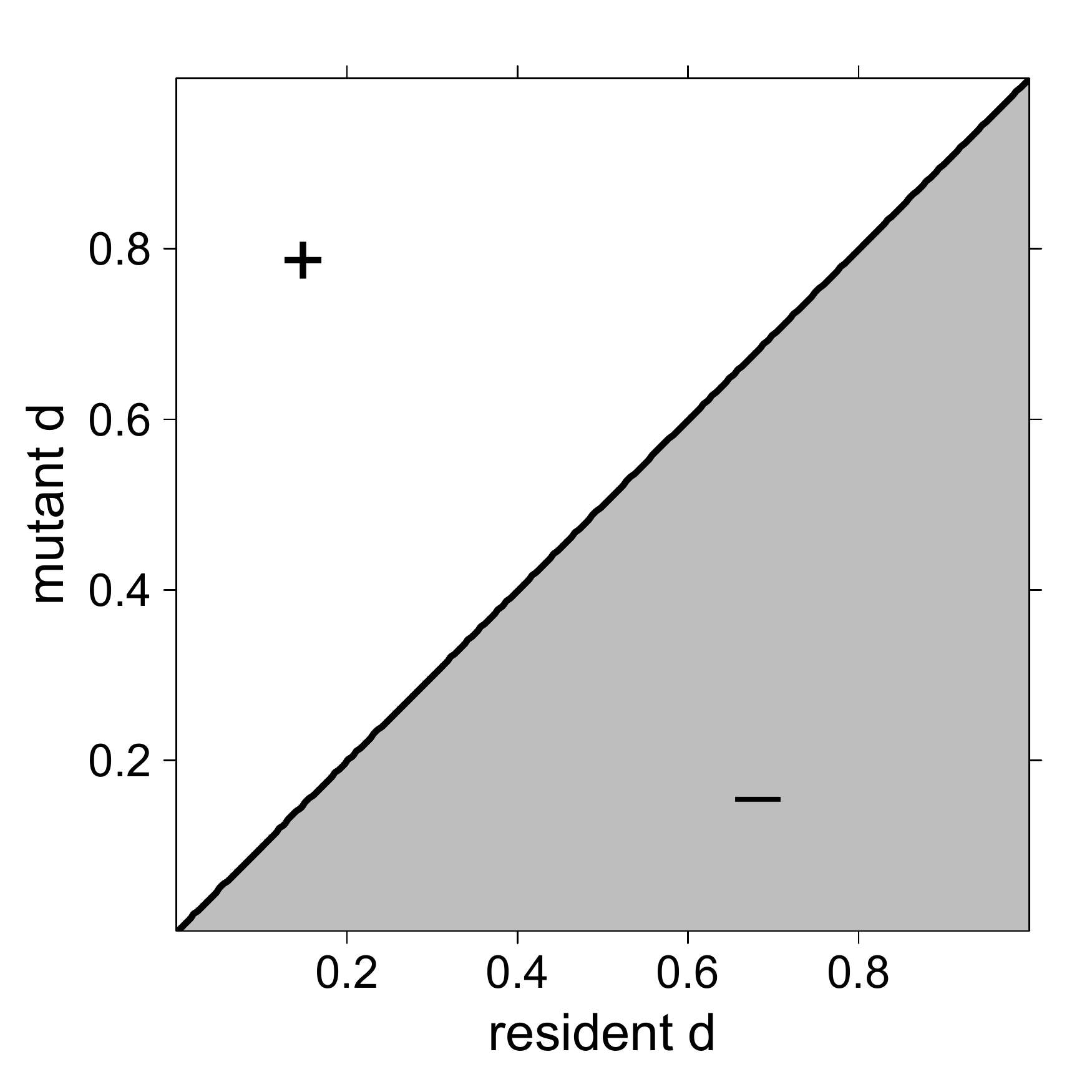}&\includegraphics[width=2.5in]{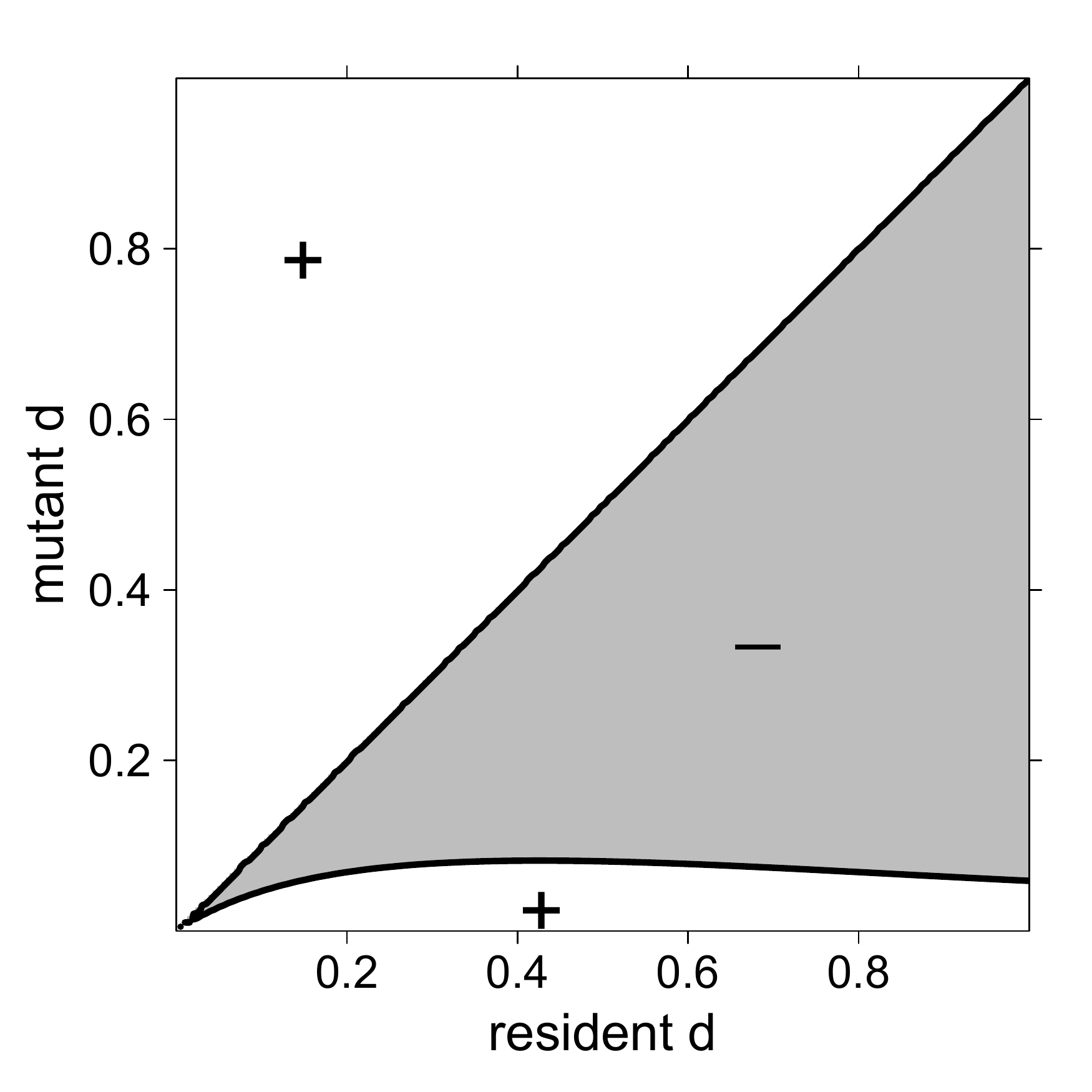}\\
(c) $\epsilon_t=-0.8,0.8$& (d) $\epsilon_t=-0.6,0.8$
\end{tabular}
\end{center}
\caption{Pairwise invasibility plots (PIPs) for a two patch Beverton-Holt model with $\lambda_t^1=2+\epsilon_t$  and $\lambda_t^2=2-\epsilon_t$. The horizontal and vertical axes correspond to resident $d_1$ and mutant $\tilde d$ dispersal rates, respectively. The dark lines, shaded regions, and unshaded regions correspond to where $\cI(d_1,\widetilde d)=1$, $\cI(d_1,\widetilde d)<1$, and $\cI(d_1,\widetilde d)>1$, respectively. In (a) and (b), $d_1=0$ is an evolutionary branching point. In (b), $d_1\approx 0.5$ is a convergently unstable singular point. In (c), $d_1=1$ is convergently stable and evolutionarily stable. In (d), $d_1=1$ is convergently stable and a local ESS, but invasable by sufficiently sedentary phenotypes.  }\label{fig:pip}
\end{figure}

When the necessary condition \eqref{eq:fast} for a highly dispersive phenotype to be a Nash equilibrium is not met, sedentary dispersers can invade the patches 
whose average fitness $\sqrt{\prod_{t=1}^2 \lambda^j_t}$ exceeds the spatially averaged fitness $\sqrt{\prod_{t=1}^2 \bar\lambda_t}$. To understand the implications of this invasion, we consider a two patch environment where environmental fluctuations are spatially asynchronous. More precisely, we assume that spatial average of fitness does not vary in time (i.e. $\bar \lambda_t=\bar \lambda$)  and the temporal fluctuations $\epsilon_t\in (-\bar \lambda,\bar \lambda)$ around this spatial average are asynchronous i.e.  
$\lambda^1_t=\bar\lambda+ \epsilon_t$ and $\lambda^2_t =\bar\lambda-
\epsilon_t$ for $t=1,2$. The necessary condition \eqref{eq:fast} for a highly dispersive phenotype to be a Nash equilibrium becomes 
\begin{equation}\label{nec}
\bar\lambda |\epsilon_1+\epsilon_2|\le  |\epsilon_1 \epsilon_2| \mbox{ and } 
\epsilon_1\epsilon_2<0 .
\end{equation}
When \eqref{nec} is not meet, extensive numerical simulations suggest that after the successful invasion of the  sedentary dispersers, the populations approach a period $2$ orbit $\hat  \bx(t)$. Moreover, these simulations 
suggest that $\bd=(1,0)$ is an ESC and, consequently, a potential evolutionary end point consisting of a dimorphism of sedentary and highly dispersive phenotypes. At this ESC, one patch is balanced and occupied by both phenotypes, while the other patch is  a sink and only occupied by the dispersive phenotype. \eqref{nec} implies that the ESC occurs when the temporal correlations of within patch fitness are not too negative (i.e. $\epsilon_1$ is not too close to $-\epsilon_2$ in Fig.~\ref{fig:2patch}). Pairwise invasibility plots (see, e.g., \cite{geritz-etal-97} for a discussion of the interpretation of these plots and the associated terminology) suggest that these ESCs can be reached by small mutational steps when $d=0$ is a convergently stable branching point  (Fig.~\ref{fig:pip}a,b). On the other hand, when \eqref{nec} is satisfied, the highly dispersive phenotype may or may not be an ESS in the strict sense (Fig.~\ref{fig:pip}c,d). Although numerical simulations confirm that the highly dispersive phenotyperesists invasion attempts by nearby phenotypes (i.e. $\cI(1,\tilde d)\le 1$ whenever $\tilde d$ is sufficiently close to $1$), relatively sedentary phenotypes still may be able invade (Fig.~\ref{fig:pip}d).

\subsection{Evolution of synchronicity}

Biotic interactions can generate oscillatory dynamics and, thereby, temporal 
variation in fitness. To illustrate the feedbacks between evolution of 
dispersal and biotically generated oscillations, we consider an extension of the coupled 
Logistic map introduced by Hastings~\cite{hastings-93}. In this model, the local dynamics are determined by the Logistic fitness function $rx(1-x)$. A fraction $d$ of all individuals disperses randomly to all patches i.e. a fraction $d/k$ of individuals from patch $j$ arrives in all patches. Under these assumptions, the dynamics of a single dispersive phenotype is given by
 \[
x^j_{t+1}=(1-d) r\,x_t^j(1-x_t^j) + \frac{d}{n} \sum_{k=1}^n r\,x_t^k(1-x_t^k)
\]
We note that in the case of $n=2$ patches, Hasting's $d$ corresponds to our $d/2$. 

\begin{figure}
\begin{center}
\begin{tabular}{cc}
\includegraphics[width=3in]{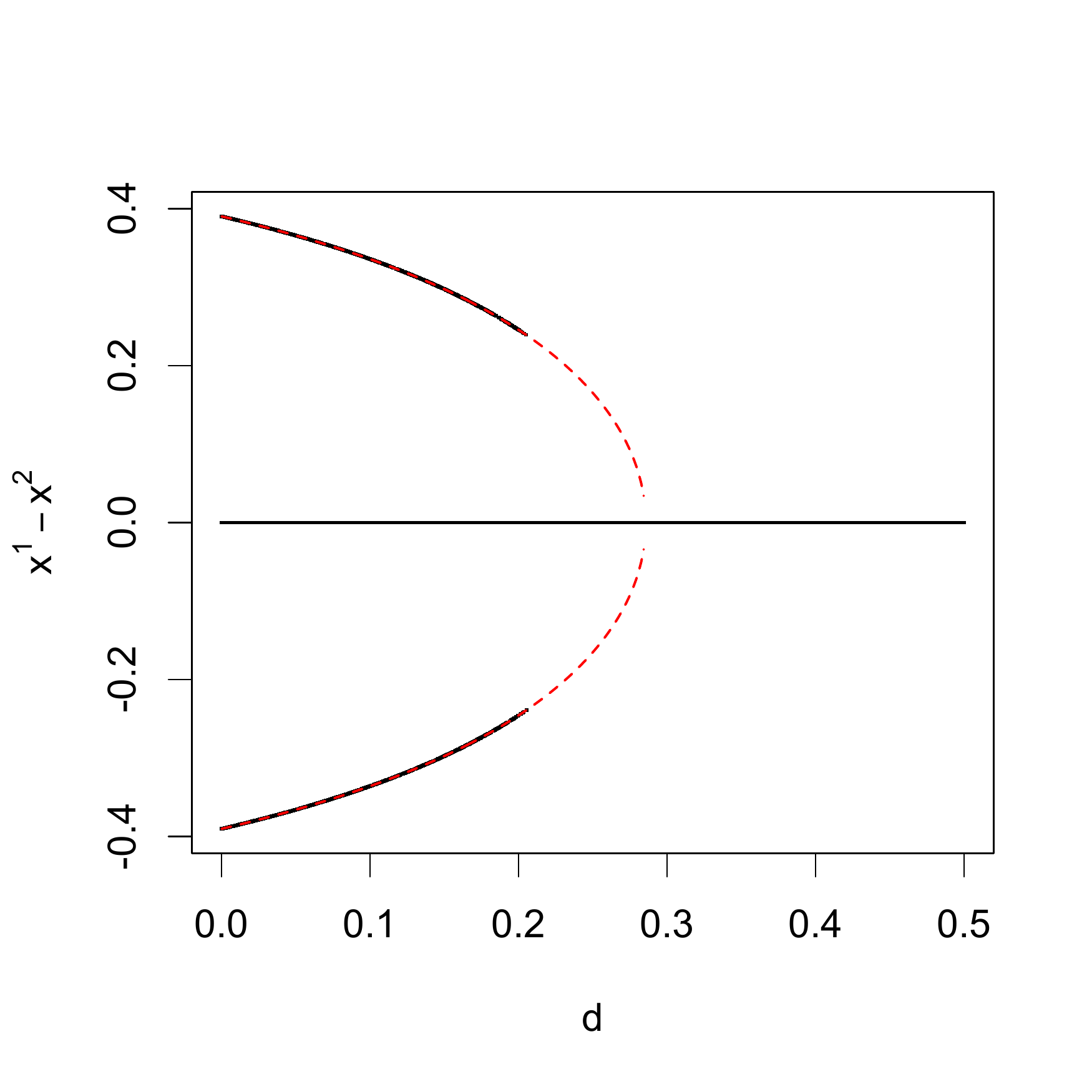}
&\includegraphics[width=3in]{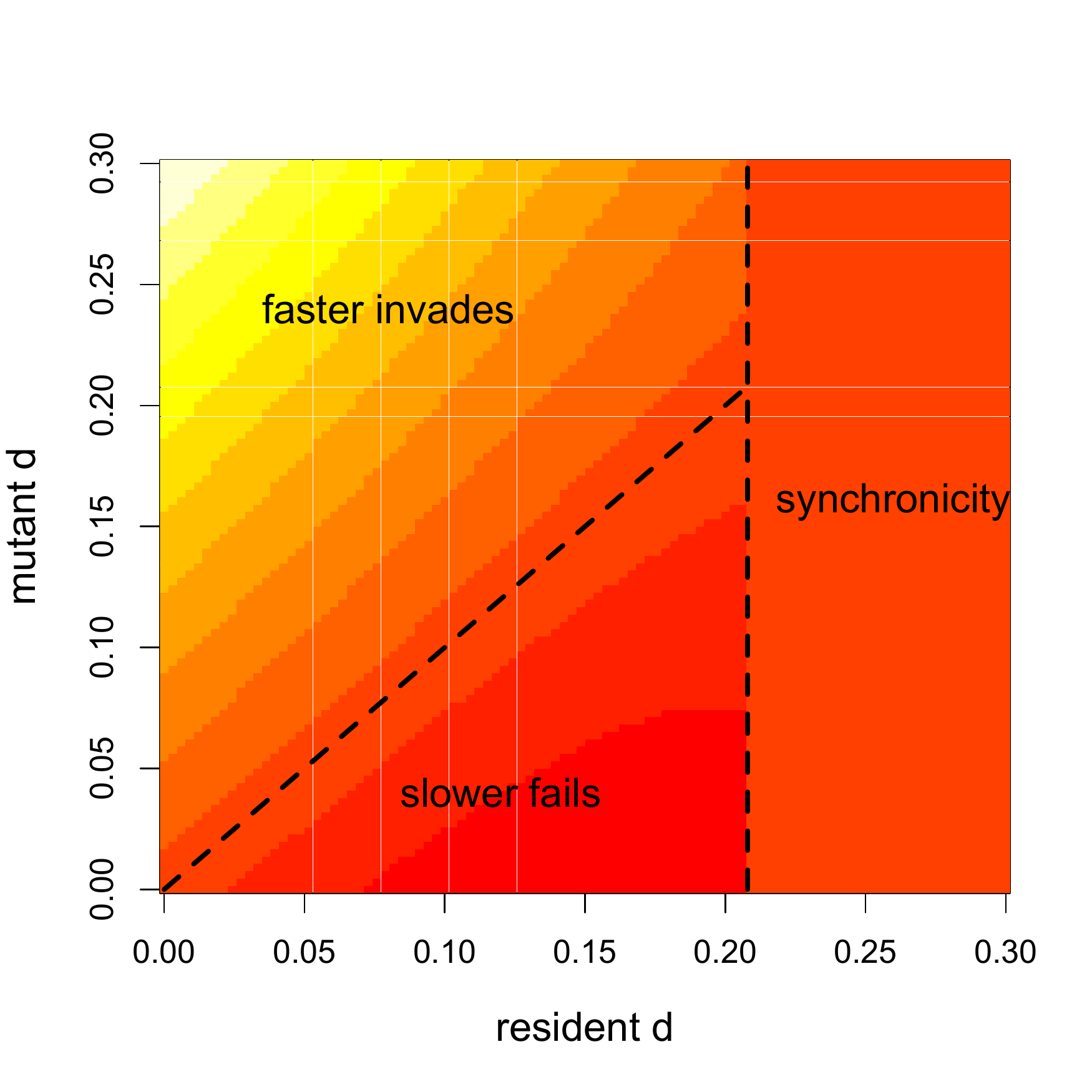}\\
(a) $r=3.4$&(b) $r=3.4$\\
\includegraphics[width=3in]{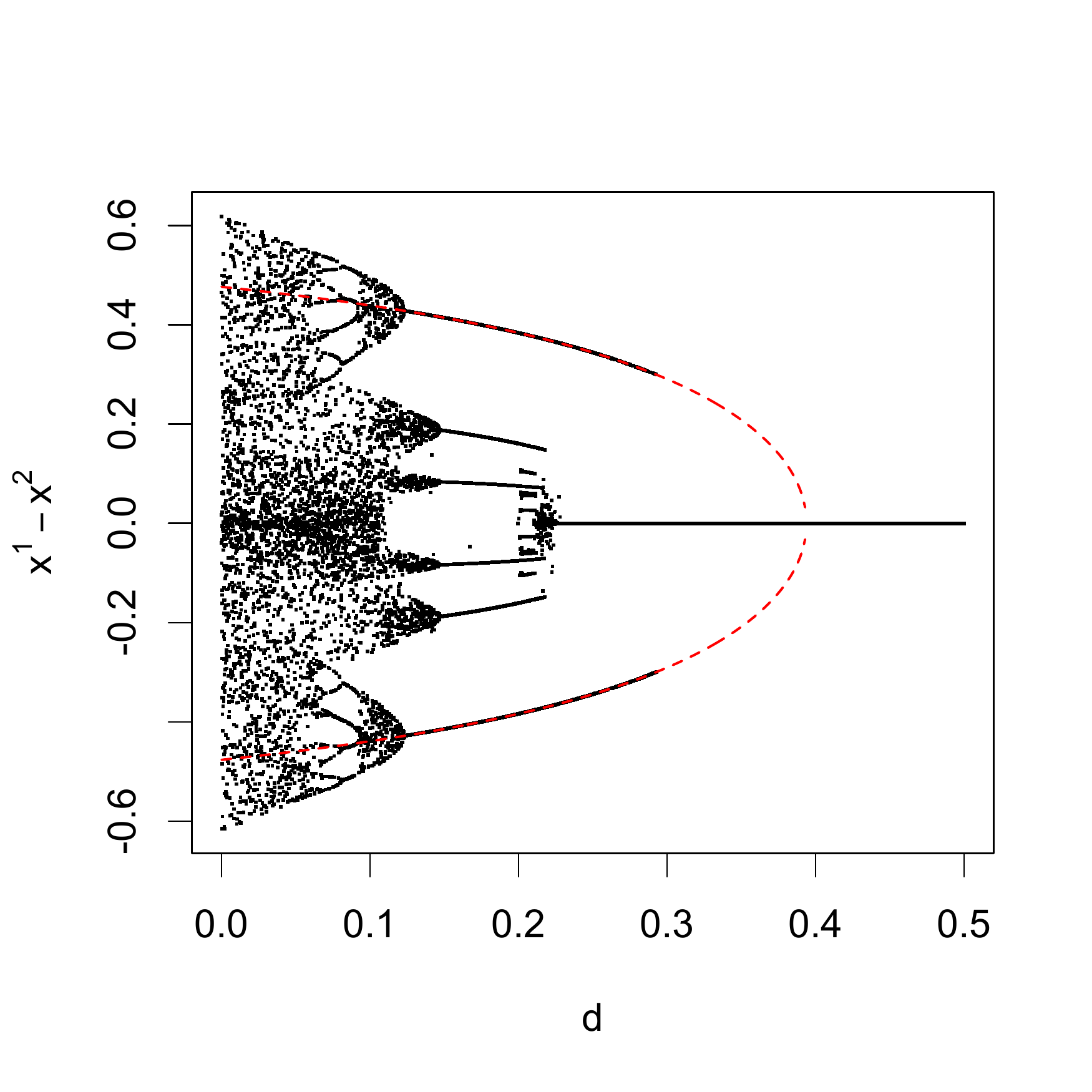}
&\includegraphics[width=3in]{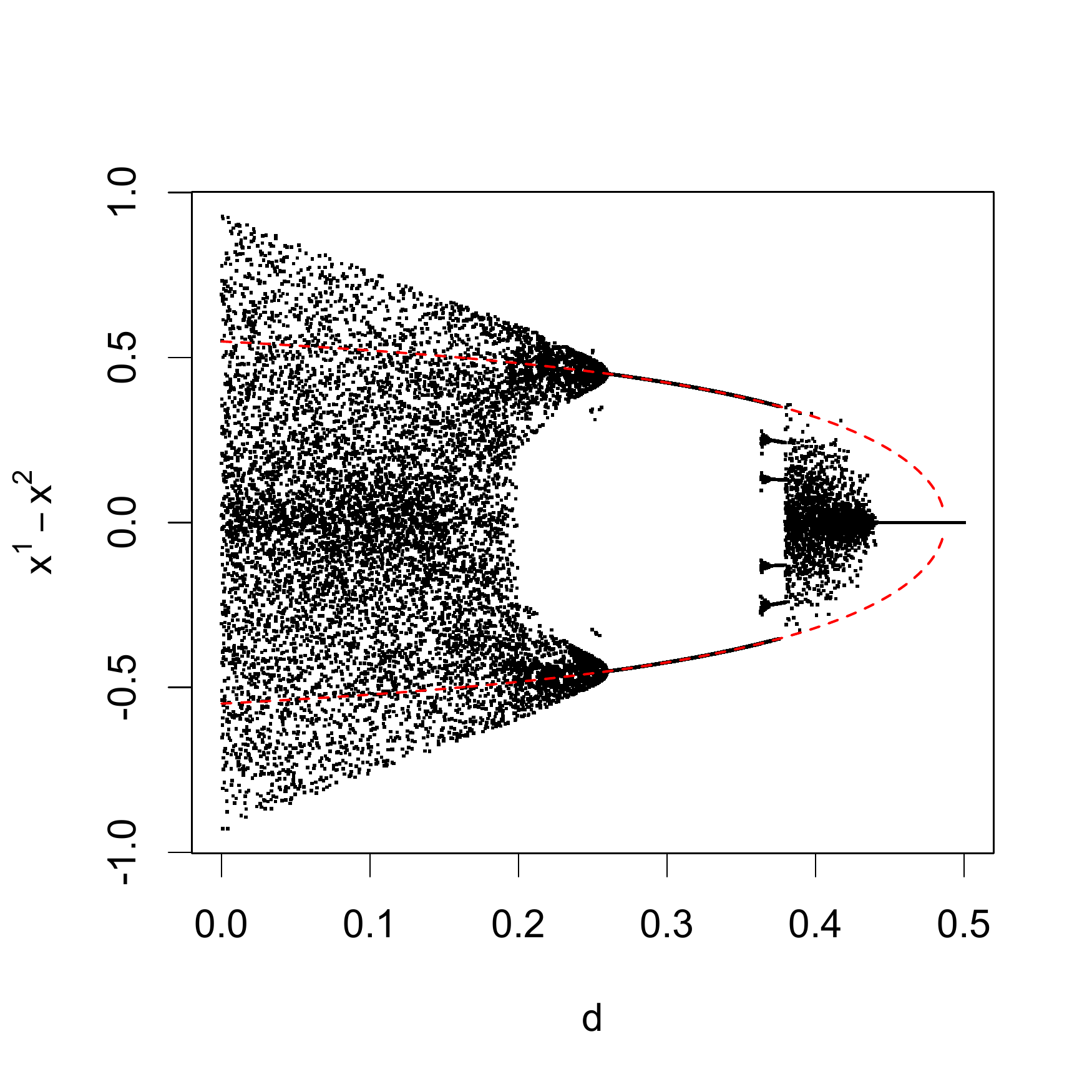}\\
(c) $r=3.65$ &(d) $r=3.95$ \end{tabular}
\caption{Evolution of spatial synchronization in a two-patch Logistic model. In (a), (c), (d), orbital bifurcation diagrams for phase difference $x^1-x^2$ are shown for for the two-patch Logistic equation. All phase differences along attractors are plotted in black. The red dashed curve corresponds to the out-of-phase, period two orbit which is stable only when it overlaps the black regions. In (b),  the contours of $\cI(d;\tilde d)$ are plotted along the out-of-phase periodic 
point whenever it is stable. }\label{fig:logistic}
\end{center}
\end{figure}

When  $r>3$ and there are two patches, Hastings~\cite{hastings-93} has shown that there is an interval of dispersal rates between $0$ and $1$ such that there is an out-of-phase stable period two  point (Figs.~\ref{fig:logistic}a,c,d). To apply our results, let $m=1$, $d_1=d$ for which there is a stable out-of-phase periodic point, $\hat \bx(1),\hat \bx(2)$  (an explicit formula for this orbit can be found in the Appendix of  \cite{hastings-93}), $S_{jk}=1/2$ for $1\le j,k\le 2$, and $f^j(t,\|\bx^j\|)=r(1-\|\bx^j\|)$. Along this out-of-phase periodic orbit, $\prod_t \bF(t,\hat\bx(t))$ is a scalar matrix. Hence, Theorem.~\ref{thm:1} implies that along this out-of-phase periodic orbit $\cI(d_1,\tilde d)>1$ for all $d_1<\tilde d \le 1$. Hence, there would be selection for higher dispersal rates. For this two patch case, this implication of Theorem~\ref{thm:1} also follows from a proposition of Doebeli and Ruxton~\cite[pg. 1740]{doebeli-ruxton-97} in which they performed a direct calculation of the eigenvalues.  For $3<r\le  3.7$, numerical simulations suggest that this selection for higher dispersal rates ultimately results in a dispersal rate that spatially synchronizes the dynamics (Figs.~\ref{fig:logistic}a-c) at which point all dispersal rates are Nash equilibria. However, at higher $r$ values such as $r=3.95$ (Fig.~\ref{fig:logistic}d), the destabilization of the out-of-phase period $2$ point (at $d_1\approx 0.38$) results in more complex asynchronous dynamics in which case our results are not applicable and evolution of dispersal may no longer synchronize the dynamics. 

\begin{figure}
\begin{center}
\begin{tabular}{cc}
\includegraphics[width=3in]{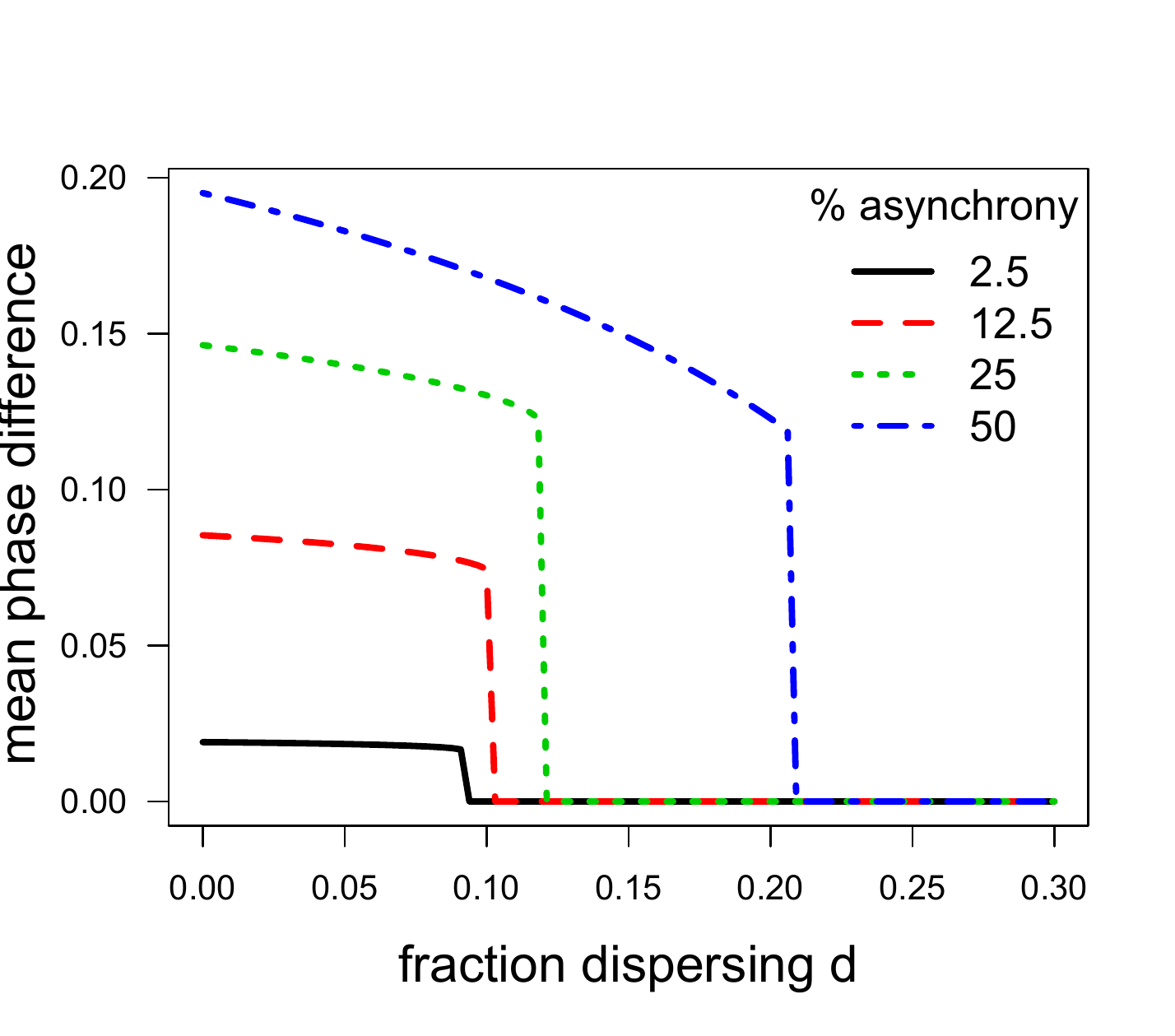}
&\includegraphics[width=3in]{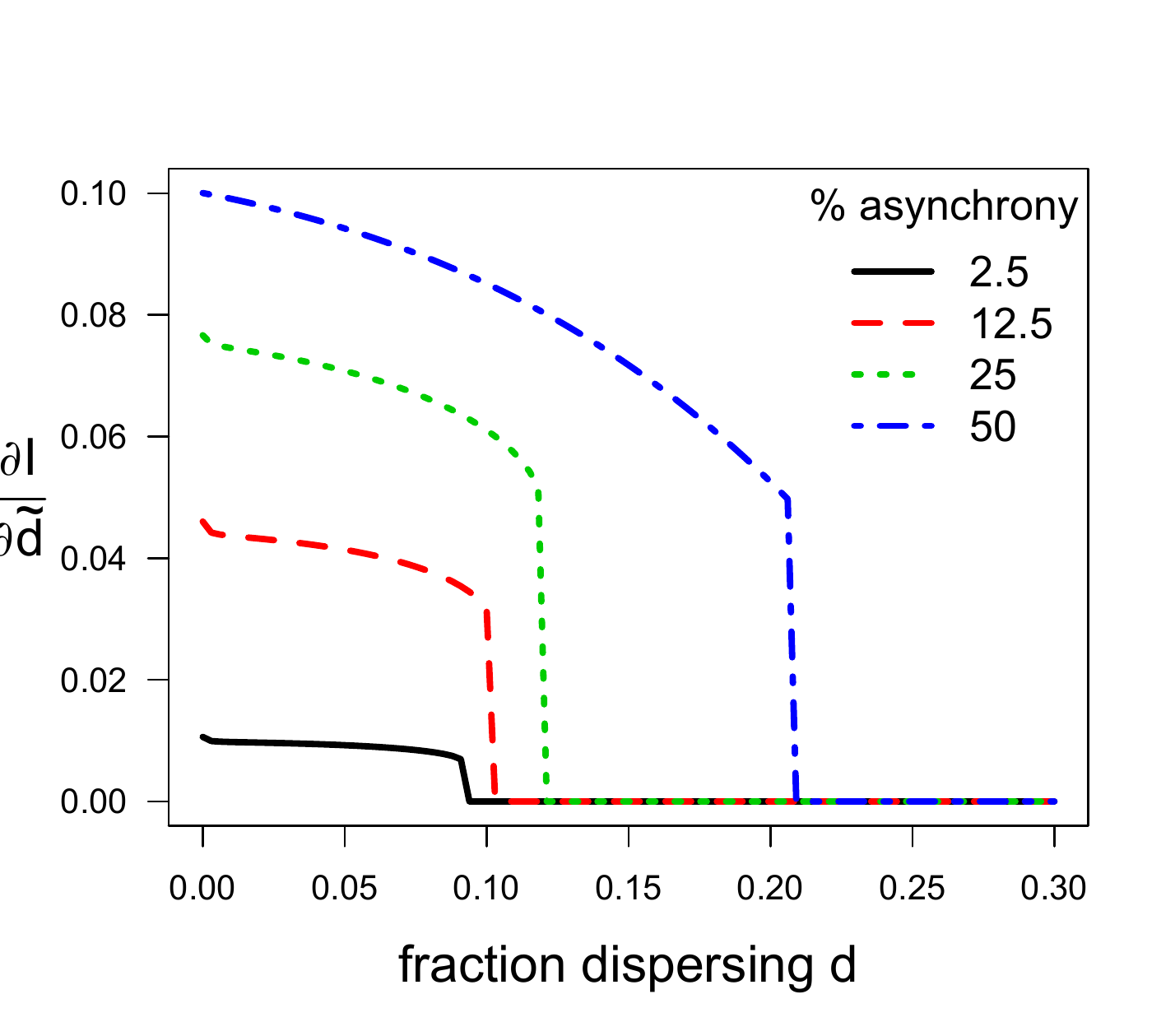}\\
(a) &(b)  \end{tabular}
\caption{Evolution of spatial synchronization for a 40 patch Logistic model. 
In (a), the mean phase difference in spatial abundance (i.e. $\frac{1}{40^2}\sum_{j\neq k} |x^j-x^k|$) for a single population exhibiting a two-cycle is plotted as function of its dispersal rate. In (b), the strength of selection $\frac{\partial \cI}{\partial \tilde d}(d,d)$ for higher dispersal rates is plotted as a function of its dispersal rate. Different lines  correspond to different percentages of initial asynchrony for sedentary phenotype e.g. 50\% implies that half the patches at $d=0$ where out of sync with the other set of patches.}  \label{fig:logistic2}
\end{center}
\end{figure}

When there are more than two patches and $r>3$, out-of-phase $2$ cycles can take on a greater diversity of forms. In particular, one can divide the landscape into two sets of patches such that patches are synchronous within each set and asynchronous across sets. Due to this potential spatial asymmetry in these out-of-phase cycles, we have not been able to show the geometric mean of fitness equals one in all patches. However, numerical simulations for $n=40$ patches and $3<r<3.45$, suggest that this does occur. In which case, Theorem~\ref{thm:1} implies that there would selection for higher dispersal rates along these asynchronous cycles. In fact, numerical simulations for $3<r<3.45$ show that there would be selection for higher dispersal rates until the dynamics are spatially synchronized (Fig.~\ref{fig:logistic2}). Moreover, these simulations show, quite intuitively, greater initial asynchrony for the dynamics of the sedentary phenotype result in a stronger selection gradient (Fig.~\ref{fig:logistic2}b) and require the evolution of higher dispersal rates to regionally synchronize the dynamics (Fig.~\ref{fig:logistic2}a).  

\section{Discussion}

We analyzed the evolution of dispersal in spatially and temporally variable environments. When there is spatial variation in fitness and within patch fitness varies in time between a lower and higher value, we proved that any evolutionary stable strategy (ESS) or evolutionarily stable coalition (ESC) includes a dispersive phenotype. In particular, our results imply a sedentary population can be always be invaded by more dispersive phenotypes. These results are particularly remarkable in light of earlier analysis showing generically the only ESS for spatially heterogenous environments without temporal heterogeneity is a sedentary phenotype~\cite{hastings-83,dockery-etal-98,parvinen-99,siap-06}. Hence, our results imply this earlier work on discrete-time models is not generic: arbitrarily small periodic perturbations result in selection for dispersal.  

Our results extend a result of \citet{parvinen-99} who proved sedentary phenotypes could be invaded by phenotypes that disperse uniformly to all patches in every generation. Moreover, they are consistent with the numerical work of Mathias et al.~\cite{mathias-etal-01} who found that ESCs always included highly dispersive phenotypes in discrete-time models with random variation in the vital rates. In contrast, our results are only partially consistent with the analysis of reaction-diffusion models by Hutson et al.~\cite{hutson-etal-03}. Hutson et al. proved that certain forms of spatio-temporal heterogeneities select for the higher dispersal rate. However, if the frequency of spatial oscillations (i.e. spatial variation) is too large or too small, then phenotypes with higher dispersal rates are driven to extinction. A partial explanation for this discrepancy is the continuity of the per-capita growth rates in the reaction-diffusion models results in positive correlations in time which may select for slower dispersal rates. 

Our analysis and numerical simulations suggest that there are two evolutionary end states for environments where spatial-temporal heterogeneity is generated by abiotic periodic forcing: either an ESS consisting of a highly dispersive phenotype or an ESC consisting of a highly dispersive phenotype and a sedentary phenotype. This is partially consistent with the numerical work of Mathias et al.~\cite{mathias-etal-01} who always found ESCs of high and low dispersal phenotypes. Mathias et al. found that these ESC always could be achieved by small mutational steps leading to an intermediate phenotype (a convergent singular strategy) at which coalitions of faster and slower dispersers could invade and coexist (evolutionary branching). Our numerical and analytic results  differ from these conclusions in two ways. First, evolutionary branching occurs at sedentary phenotypes not intermediate phenotypes. Consequently, prior to the branching event, one of the strategies in the ESC is already present. Second, while evolution by small mutational states may culminate in a highly dispersive phenotype, an ESC of highly dispersive and sedentary phenotypes may arise by the invasion of sufficiently slow dispersers i.e. large mutational steps are required to realize the ESC. 

When spatial and temporal heterogeneity is created purely by biotic interactions and initial conditions (e.g. the coupled Logistic map), we have shown there can be selection for higher dispersal rates that ultimately may synchronize the population dynamics across space. Following this synchronization event, dispersal would become selectively neutral. Our analytic results extend Doebeli and Ruxton's~\cite{doebeli-ruxton-97} two-patch analysis to multiple patches. In a multispecies context, Dercole et al.~\cite{dercole-etal-07} found a related result. Numerical simulations of tritrophic communities with chaotic dynamics showed that  evolution of dispersal often drove these spatial networks to weak forms of synchronization. 

Our analysis states that an ESS or a ESC for dispersal result in some patches being sinks  (i.e. geometric mean of fitness less than one) and the remaining patches being balanced (i.e. geometric mean of fitness equal to one). In the case of ESCs of highly dispersive and sedentary phenotypes, the sedentary phenotypes only reside in the balanced patches. In the case of ESS of highly dispersive phenotypes, all patches may be sinks despite the population persisting. In contrast, evolutionarily stable strategies for conditional dispersal in purely spatially heterogenous environments result in all occupied patches being balanced~\cite{siap-06,cantrell-etal-07} and, thereby, exhibiting an ideal and free distribution~\cite{fretwell-lucas-70}. By appropriately modifying empirical methods  for distinguishing between source-sink dynamics and balanced-dispersal~\cite{doncaster-etal-97,diffendorfer-98}, one might be able to find empirical support for these alternative, evolutionarily stable, spatial-temporal patterns of fitness. 

Despite extensive progress, there are many mathematical challenges to overcome if we want to have a better analytical understanding of the evolution of dispersal. Two challenges, of particular interest here, are going beyond the assumption period 2 environments, and accounting for various forms of costs for dispersal. While period $2$ environments can be viewed as a crude cartoon of seasonal environments, most environmental fluctuations exhibit multiple modes in their Fourier decomposition and have a significant stochastic component~\cite{vasseur-yodzis-04}. Despite some progress~\cite{wiener-tuljapurkar-94,prsb-10}, a detailed analytic understanding of the interactive effects of this environmental stochasticity and dispersal on regional fitness remains largely elusive. Alternatively, the ability to disperse or the act of dispersing  often is accompanied by costs to the individual. Dispersing individuals may die before reaching their destination. Alternatively, there may trade-offs between dispersal ability and competitive abilities~\cite{levins-culver-71,yu-wilson-01}. If these costs or tradeoffs are sufficiently strong, they can substantially alter the predictions presented here. For example, \citet{parvinen-99} has shown, quite intuitively, that if costs to dispersal are sufficiently strong, there is always selection against dispersal. However, we need  the development of new mathematical methods to unravel the implications of intermediate costs on the evolution of dispersal in environments with spatio-temporal variation.

\appendix

\section{Proof of Theorem~\ref{thm:main}}

The proof of Theorem~\ref{thm:main} depends on the following key result which 
is proven in Appendix B.  We do not impose the assumption that $S$ is irreducible
in this result.

\begin{theorem} \label{thm:1}
Let $D = \diag(d_1, \dots, d_n)$
with $d_1, \dots, d_n \in (0, \infty)$, and let $S$ be an
$n\times n$ column stochastic
matrix such that $RSR^{-1}$ is symmetric for some diagonal matrix $R$.
For $t \in [0, 1]$, denote by $\varrho(F(t))$ the Perron (largest) 
eigenvalue of 
$$F(t) = D[(1-t)I_n + tS]D^{-1}[(1-t)I_n + tS].
$$
Then $\varrho(F(t))$ is either an increasing function on $[0,1]$
or a constant function on $[0,1]$.
The latter case happens if and only if $D$ and $S$ commute, equivalently,
there is a permutation matrix $P$ such that 
$PSP^T = S_1 \oplus S_2 \oplus \cdots \oplus S_k$
and $PDP^T = d_1 I_{n_1}  \oplus d_2 I_{n_2} \oplus \cdots \oplus d_k 
I_{n_k}$, where $S_j \in M_{n_j}$ for $j = 1, \dots, k$,
and $n_1 + \cdots + n_k = n$. 
\end{theorem}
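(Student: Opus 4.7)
The plan is to exploit the diagonal similarity hypothesis to turn $F(t)$ into a Gram-type product $N(t)N(t)^T$, reducing the spectral-radius question to a norm question on an affine curve $t\mapsto N(t)$. Setting $T:=RSR^{-1}$ (symmetric by hypothesis) and using that the diagonal matrices $R$ and $D$ commute, conjugation by $R$ gives
\[
RF(t)R^{-1} \;=\; D\,\tilde M(t)\,D^{-1}\,\tilde M(t),\qquad \tilde M(t):=(1-t)I+tT.
\]
A further similarity by $D^{-1/2}$, combined with the symmetry of $\tilde M(t)$, rewrites this as $N(t)N(t)^T$ where $N(t):=D^{1/2}\tilde M(t)D^{-1/2}=(1-t)I+tY$ and $Y:=D^{1/2}TD^{-1/2}$. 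Therefore $\varrho(F(t))=\|N(t)\|_2^{\,2}$.

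The monotonicity step is then a pure convexity argument. The spectral norm of an affine matrix-valued function is convex, so $g(t):=\|N(t)\|_2$ is convex on $[0,1]$. Since $S$ is column stochastic, $S^T\mathbf{1}=\mathbf{1}$, and the symmetry of $T$ forces $T(R^{-1}\mathbf{1})=R^{-1}\mathbf{1}$; so $w:=D^{1/2}R^{-1}\mathbf{1}$ satisfies $Yw=w$, and hence $N(t)w=w$ for every $t$. This yields $g(t)\ge 1=g(0)$, and a convex function whose minimum is attained at the left endpoint is nondecreasing, so $\varrho(F(t))=g(t)^2$ is nondecreasing on $[0,1]$.

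For the dichotomy, I would analyse the equality case. A short secant argument using convexity, monotonicity, and the lower bound $g\ge 1$ shows that if $g$ is not strictly increasing, then $g\equiv 1$ on some initial segment $[0,t_0]$ with $t_0>0$; so it suffices to show that $g(t_0)=1$ for a single $t_0>0$ forces the stated block decomposition. The equality $\|N(t_0)w\|=\|w\|=\|N(t_0)\|_2\|w\|$ says that $w$ is a maximising right singular vector of $N(t_0)$, so $N(t_0)^T N(t_0)w=w$. Expanding
\[
N(t_0)^T N(t_0) \;=\; I - t_0\,(2I-Y-Y^T) + t_0^{\,2}\,(I-Y)^T(I-Y)
\]
and using $(I-Y)w=0$, this collapses to $(2I-Y-Y^T)w=0$, that is, $Y^Tw=w$. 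Unpacking the definitions, this is exactly the assertion that $DR^{-1}\mathbf{1}\in\ker(T-I)$.

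To close, I would use the structural fact that a symmetric nonnegative matrix is either irreducible or, after a permutation, block-diagonal with irreducible diagonal blocks; on each such block $T_j$, Perron--Frobenius yields a one-dimensional $1$-eigenspace spanned by $R_j^{-1}\mathbf{1}_{n_j}$. Hence $DR^{-1}\mathbf{1}\in\ker(T-I)$ forces $D$ to be constant on each block, i.e.\ $D=d_1 I_{n_1}\oplus\cdots\oplus d_k I_{n_k}$ after the same permutation, which gives $DS=SD$ and the claimed block form for $S$. The converse is an immediate check: in this block-scalar case $F(t)$ decomposes as a direct sum of $((1-t)I_{n_j}+tS_j)^2$, each a column-stochastic matrix squared with spectral radius one. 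The most delicate step is the passage from $g(t_0)=1$ to $Y^Tw=w$, which relies on exploiting simultaneously that $w$ is both a fixed point of $Y$ and a maximising right singular vector of $N(t_0)$; propagating this rigidity through the reducible case via the symmetric block-diagonal structure is what remains.
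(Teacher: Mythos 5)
Your proof is correct, and although it opens with the same reduction as the paper --- conjugate by $R$ and then by $D^{\pm 1/2}$ to write $\varrho(F(t))=\|N(t)\|^2$ where $N(t)=(1-t)I+tY$ is exactly the paper's Gram factor $B(t)$ --- the two technical steps in the middle are genuinely different. For monotonicity the paper proves a standalone lemma (Theorem~\ref{thm:2}) by a bare-hands expansion $Au=\alpha u+\beta v$ at a norm-attaining unit vector, and then feeds $A_0=t_0D^{1/2}(\tilde S-I)D^{-1/2}$ into it; you instead use convexity of $t\mapsto\|N(t)\|$ (a norm composed with an affine map) together with the observation that the single vector $w=D^{1/2}R^{-1}\mathbf{1}$ satisfies $N(t)w=w$ for every $t$, so the function is bounded below by its value at $t=0$ and a convex function minimized at the left endpoint is nondecreasing. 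This is shorter, avoids Theorem~\ref{thm:2} entirely, and isolates where column stochasticity enters. For the rigidity step the paper extracts from its lemma the fact that $A_0$ is unitarily similar to $0_k\oplus\tilde A_0$ with $\tilde A_0$ invertible (so the kernel and left kernel of $A_0$ coincide) and then tests against Perron vectors of the blocks; you instead note that when $\|N(t_0)\|=1$ the fixed vector $w$ is automatically a maximizing right singular vector, so $N(t_0)^TN(t_0)w=w$ collapses, via $(I-Y)w=0$, to $Y^Tw=w$, i.e.\ $DR^{-1}\mathbf{1}\in\ker(T-I)$. From there both arguments finish the same way: Perron--Frobenius on each irreducible block of the symmetric nonnegative $T$ gives a one-dimensional $1$-eigenspace spanned by a positive vector, forcing $D$ to be scalar on each block and hence $DS=SD$. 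Two small items should be made explicit (the paper is equally terse on the first): $R$ may be assumed to have positive diagonal, since the symmetry condition $R_i^2S_{ij}=R_j^2S_{ji}$ depends only on $R_i^2$, which is what makes $T=RSR^{-1}$ nonnegative for the Perron--Frobenius step; and the secant argument showing that a convex, nondecreasing function taking the same value at $0<s<t$ must be constant on $[0,t]$ should be written out, since it is what upgrades ``nondecreasing and not constant'' to ``increasing on all of $[0,1]$'' as the theorem asserts.
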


\bigskip\noindent
{\bf Remark.} The above result covers the case of symmetric $S$. It also covers 
the case when $S$ is an irreducible tridiagonal column stochastic matrix; one 
can use a simple continuity argument to extend the result to reducible 
tridiagonal column stochastic matrices. 

To prove the first assertion of Theorem~\ref{thm:main}, assume that  
$\max_id_i=0$. Assumption \textbf{A2} implies that 
$\bF(2,\hat\bx(2))\bF(1,\hat\bx(1))=I$. Theorem~\ref{thm:1} implies 
$\cI(\bd;\tilde d)>1$ for all $\tilde d\in (0,1]$. 

To prove the second assertion of Theorem~\ref{thm:main}, assume that $\bd$ is a (possibly mixed) Nash equilibrium. The first assertion of the Theorem implies that $\max_i d_i >0$.  Next, suppose to the contrary that there exists $j$ such that
$ f^j(1,\|\hat \bx^j(1)\|)f^j(2,\|\hat \bx^j(2)\|)>1$ 
i.e. there is a source patch. Then 
\[
\cI(\bd;0)=\max_j \sqrt{ f^j(1,\|\hat \bx^j(1)\|)f^j(2,\|\hat \bx^j(2)\|)}>1
\]
and by continuity $\cI(\bd;\tilde d)>1$ for all $\tilde d\ge 0$ sufficiently small. As this 
contradicts the assumption that $\bd$ is a Nash coalition, it follows that  
$f^j(1,\|\hat \bx^j(1)\|)f^j(2,\|\hat \bx^j(2)\|)\le 1$ for all $j$.  Finally, suppose to the contrary that $f^j(1,\|\hat \bx^j(1)\|)f^j(2,\|
\hat \bx^j(2)\|)= 1$ 
for all $j$. Theorem~\ref{thm:1} implies that $\cI(\bd;\max_id_i)>1$ 
which contradicts the fact that $\cI(\bd;\max_id_i)=1$.

\section{Proof of Theorem~\ref{thm:1}}

Denote by $\|A\|$ the operator norm of the matrix $A$.
The proof of Theorem \ref{thm:1} depends on the following.

\begin{theorem} \label{thm:2} Suppose $A \in M_n$ is nonzero and 
satisfies $\|I+ A\| \ge 1$. Then 
$\|I+tA\| \ge \|I+A\|$ for all $t \ge 1$ and one of the following
condition holds.

\medskip
{\rm (a)}  The function 
$t \mapsto \|I+tA\|$ is increasing for $t \ge 1$.

\medskip

{\rm (b)}  
There is a unitary matrix $U$ such that
$U^*AU = 0_k \oplus \tilde A$, where  
$\tilde A \in M_{n-k}$ is invertible
and satisfies $\|I_{n-k} + \tilde A\| < 1$.
Consequently, there is $t^* > 1$ such that 
$\|I_{n-k}+t^*\tilde A\| = 1$ and
the function $t \mapsto \|I+tA\|$
has constant value 1 for $t \in [1, t^*]$
and is increasing for $t > t^*$.
\end{theorem}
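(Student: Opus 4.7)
My plan centers on the convex function $g(t)=\|I+tA\|^2=\lambda_{\max}(P(t))$, where $P(t)=(I+tA)^*(I+tA)=I+t(A+A^*)+t^2A^*A$. For each unit vector $x$, the Rayleigh quotient $x^*P(t)x$ is a convex (non-negative) quadratic in $t$, so $g$, being a pointwise supremum of convex functions, is convex. The first assertion $\|I+tA\|\ge \|I+A\|$ for $t\ge 1$ then falls out immediately: $g(0)=1\le g(1)$ and convexity (non-decreasing slopes) force $g(t)\ge g(1)$ on $[1,\infty)$.

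For the dichotomy I would suppose $g$ fails to be strictly increasing on $[1,\infty)$ and trace out what this forces. Since $g$ is non-decreasing there, failure of strictness produces $g(a)=g(b)$ for some $1\le a<b$; convexity then pins $g$ to this common value on $[a,b]$, and the value has to equal $1$ (any higher constant would drag $g$ non-increasing back to $0$ and contradict $g(0)=1$). Thus $g\equiv 1$ on a flat interval containing $1$ in its interior, which means $Q(t):=(A+A^*)+tA^*A\preceq 0$ on that interval. For an interior $t$ and any $v\in V_t:=\ker Q(t)$, the affine function $s\mapsto v^*Q(s)v$ vanishes at $s=t$ and is $\le 0$ on a neighbourhood, so it vanishes identically; its slope $\|Av\|^2$ is therefore zero, giving $Av=0$ and then $A^*v=-Av=0$. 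Hence $V_t\subseteq K:=\ker A\cap\ker A^*$, the reverse inclusion is automatic, and $V_t=K\ne\{0\}$ on the flat interval.

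With $K$ identified, the block decomposition is routine. $K$ is trivially $A$- and $A^*$-invariant, and $K^\perp$ is invariant via $\langle Av,w\rangle=\langle v,A^*w\rangle=0$ for $v\in K^\perp$, $w\in K$, so a unitary $U$ aligning the decomposition $\mathbb{C}^n=K\oplus K^\perp$ yields $U^*AU=0_k\oplus\tilde A$ with $k=\dim K$ and $\tilde A=A|_{K^\perp}$. Invertibility of $\tilde A$ is forced by the same eigenvector argument: $w\in K^\perp$ with $\tilde Aw=0$ satisfies $w^*P(t)w=\|w+tAw\|^2=\|w\|^2$, which saturates $P(t)\preceq I$ on the flat interval, so $w\in V_t=K$ and therefore $w=0$. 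The strict bound $\|I_{n-k}+\tilde A\|<1$ is proved the same way: a unit norm-attainer at $t=1$ would sit in $K^\perp\cap V_1=\{0\}$. Finally, $t^*:=\sup\{t:g(t)=1\}$ is finite because $A\ne 0$ makes $Q(t)/t\to A^*A$ acquire a positive eigenvalue, forcing $Q(t)\not\preceq 0$ for large $t$; convexity then promotes $g(t^*)=1$ with $g(t)>1$ for $t>t^*$ to strict monotonicity on $[t^*,\infty)$.

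The step I expect to be the main obstacle is the identification $V_t=K$: everything else is convexity bookkeeping or clean invariant-subspace algebra, but extracting $Av=0$ and $A^*v=0$ from a one-sided inequality on $Q$ combined with Rayleigh-quotient equality at an interior point is the structural heart of the argument, and it carries the entire case-(b) conclusion with it.
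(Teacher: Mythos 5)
Your proof is correct, and it reaches the same conclusions by a genuinely different (and arguably cleaner) route. The paper works pointwise: it picks a norm-attaining unit vector $u$, writes $Au=\alpha u+\beta v$, and expands $\|(I+tA)u\|^2$ explicitly to show it grows in $t$ unless $Au=0$; it then obtains strict monotonicity by a bootstrap (replacing $A$ by $t_0A$ at each later time), and in case (b) it extracts the block form $0_k\oplus\tilde A$ from a Schur (lower-triangular) form by observing that the first $k$ columns of $I+U^*AU$ must have norm exactly $1$. Your argument packages the same underlying quadratic $t\mapsto x^*(I+tA)^*(I+tA)x$ into the statement that $g(t)=\|I+tA\|^2$ is convex as a supremum of convex quadratics; monotonicity on $[1,\infty)$ and the fact that any flat stretch forces $g\equiv 1$ then come for free from convexity, replacing the paper's bootstrap. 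Your identification of the degenerate directions as $K=\ker Q(t)=\ker A\cap\ker A^*$ for $t$ interior to the flat interval (via the affine function $s\mapsto v^*Q(s)v$ being $\le 0$ near $t$ and zero at $t$, hence of zero slope $\|Av\|^2$, and then $A^*v=Q(t)v-tA^*Av=0$) is the structural replacement for the paper's Schur-form computation, and the orthogonal decomposition $\mathbb{C}^n=K\oplus K^\perp$ with $K^\perp$ invariant gives the same $U^*AU=0_k\oplus\tilde A$. What your approach buys is conceptual transparency (convexity does all the monotonicity bookkeeping, and the block structure is intrinsic rather than basis-dependent); what the paper's buys is elementarity (a two-dimensional computation with $\alpha,\beta$ and no appeal to convex analysis). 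Two small points you should tighten in a write-up: justify that the flat interval, which a priori lies in $[1,\infty)$, extends to contain $1$ in its interior (this follows since $g(0)=1$ and $g(a)=1$ force $g\le 1$ on $[0,a]$ while monotonicity on $[1,\infty)$ and $g(1)\ge 1$ force $g\ge 1$ there), and note explicitly that for $Q(t)\preceq 0$ the vanishing of the quadratic form $v^*Q(t)v$ is equivalent to $v\in\ker Q(t)$, which is what lets you pass between norm-attainers and kernel vectors.
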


\it Proof. \rm Let $u$ be a unit vector 
such that $\|I+A\| = \|(I+A)u\|$ and 
$Au = \alpha u + \beta v$, where $\{u, v\}$ is an orthonormal family.
By our assumption, 
$$\|(I+A)u\| = \|(1+\alpha) u + \beta v\| = |1+\alpha|^2 + |\beta|^2 \ge 1,$$
i.e., 
$$2{\rm Re}(\alpha) + |\alpha|^2 + |\beta|^2 \ge 0.$$
Thus, for $t > 1$, 
\begin{eqnarray*}
\|I+tA\| 
&\ge& \|(I+tA)u\| \\
&=&  |1+ t\alpha|^2 + |t\beta|^2 \\
&=& |1 + \alpha|^2 + |\beta|^2 + 2(t-1) {\rm Re}(\alpha) + 
(t^2-1)[|\alpha|^2+|\beta|^2] \\
&=& \|I+A\| + (t-1)[2{\rm Re}(\alpha)+|\alpha|^2+|\beta|^2]
+ t(t-1)[|\alpha|^2+ |\beta|^2] \\
&\ge& \|I+A\|.
\end{eqnarray*}

(a) Suppose there is unit vector $u$ in the above calculation 
satisfying $Au \ne 0$, i.e., $(\alpha, \beta) \ne (0,0)$.
Then the last inequality in the  calculation is a strict 
inequality.  Thus, $\|I+tA\| > \|I+A\|$.

Now, for any $t_0 > 1$, we have $\|I + t_0 A\| > \|I+A\| \ge 1$.
If $u_0$ is a unit vector satisfying $\|(I+t_0 A)u_0\| = \|I + t_0A\|$,
then $t_0 A u_0 \ne 0$; otherwise, $\|I+t_0A\| = 1$. 
Consequently, if we replace  $A$ by $t_0A$ in the above proof,
we have
$$\|I+t(t_0A)\| > \|I+t_0A\|$$ 
for any $t \ge 1$.
Thus the function $t \mapsto \|I+tA\|$ is increasing for $t > 1$.

(b) Suppose $Au = 0$ for every unit vector $u$ satisfying
$\|(I+A)u\| = \|I+A\|$.  In particular, we have
$$\|I+A\| = \|(I+A)u\| = \|u\| = 1.$$
Let $U$ be unitary such that 
$U^*AU$ is lower triangular form with the first 
$k$ diagonal entries equal to zero, and the last $n-k$
diagonal entries nonzero. 
Since 
$$\|I + A\| =  \|I + U^*AU\| = 1,$$
we see that 
$$e_j^*(I+U^*AU)^*(I+U^*AU)e_j = \|(I+U^*AU)e_j\|^2 \le 1$$
for $j = 1, \dots, k$. (Here 
$\{e_1, \dots, e_n\}$ is the standard basis for $\IC^n$.)
As a result, we see that $U^*AU = 0_k \oplus \tilde A$,
where $\tilde A \in M_{n-k}$ is invertible.

Since $Au = 0$ for every unit vector $u$ satisfying 
$\|(I+A)u\| = \|I+A\| = 1$, we conclude that 
$\|(I_{n-k} + \tilde A)v\| < 1$ for any unit vector $v \in \IC^{n-k}$.
Thus, $\|I_{n-k} + \tilde A\| < 1$. 

Note that $A \ne 0$ implies $\tilde A$ is non-trivial, i.e., $k \ne n$. 
For sufficiently large $t$, we have $\|I_{n-k}+ t\tilde A\| \ge 1$.
Let $t^*$ be the smallest real number in $(1, \infty)$
such that $\|I_{n-k}+t^*\tilde A\| = 1$.
Since $\tilde A$ is invertible, case (a) must hold
and the function $t \mapsto \|I_{n-1} + t \tilde A\|$
is increasing for $t \ge t^*$. 
Hence for $t \ge t^*$, we have  $\|I + tA \| =  \|I_{n-k} + \tilde A\|$
and the function $t \mapsto \|I+tA\|$ is increasing.
\qed

\bigskip
\it Proof of Theorem \ref{thm:1}. \rm
Note that 
$F(t) = D[(1-t)I_n + tS]D^{-1}[(1-t)I_n + tS]$ 
and 
\begin{eqnarray*}
\tilde F(t) &=& RF(t)R^{-1}\\
&=& RDR^{-1}[(1-t)I_n + tRSR^{-1}]RD^{-1}R^{-1}[(1-t)I_n + tRSR^{-1}] \\
&=& D[(1-t)I_n + tRSR^{-1}]D^{-1}[(1-t)I_n + tRSR^{-1}]
\end{eqnarray*}
have the same eigenvalues and hence the same spectral radius.
Here, we use the fact that $RDR^{-1} = D$ as $R$ is 
a diagonal matrix.
So, $\tilde S = RSR^{-1}$ is symmetric with largest eigenvalue 
equal to 1, and all other eigenvalues lying in $[-1,1]$.
Moreover, if 
$$B(t) = D^{1/2}[(1-t)I_n + t\tilde S]D^{-1/2},$$
then
$D^{-1/2}F(t)D^{1/2} = B(t)B(t)^T$ so that 
$\varrho(F(t)) = \varrho\left(D^{-1/2}F(t)D^{1/2}\right) =  \|B(t)\|^2$.

Suppose $0 \le t_0 < t_0 + t \le 1$.
Let 
$A_0 = t_0 D^{1/2}(\tilde S-I_n)D^{-1/2}$. 
Then 
$$\|I+A_0\| = \|B(t_0)\| \ge \varrho(B(t_0)) = 1.$$
By Theorem \ref{thm:2},
\begin{equation}\label{eqn}
\|B(t_0+t)\| = \|I+ (1+t/t_0)A_0\| \ge \|I+A_0\| = \|B(t_0)\|.
\end{equation}
Thus, $\varrho(F(t)) = \|B(t)\|^2$ is non-decreasing on $[0, 1]$. 
Moreover, by Theorem \ref{thm:2}
the inequality in (\ref{eqn}) is an equality if 
and only if $A_0$ is unitarily similar to $0_k \oplus \tilde A_0$
where $\tilde A_0$ is invertible.
Thus, 

\bigskip
\ (1) the null space of $A_0$ has dimension $k$, and 

\ (2) $A_0v = 0$ if and only if $v^TA_0 = 0$.

\bigskip\noindent
From (1), we see that the eigenspace of the Perron root of the matrix
$D^{1/2}\tilde S D^{-1/2}$ has dimension $k$. So, the matrix
is permutationally similar to a $(k+1)\times (k+1)$ upper 
triangular block matrix so that each of the first $k$ 
diagonal block is square irreducible and has eigenvalue 1,
and the last block has eigenvalue less than 1.
Since $\tilde S$ is symmetric and $D$ is diagonal, there is a
permutation matrix $P$ such that $P\tilde SP^T
= \tilde S_1 \oplus \cdots \oplus \tilde S_{k+1}$ 
and $PDP^{-1} = D_1 \oplus \cdots \oplus D_{k+1}$
with $\tilde S_j, D_j \in M_{n_j}$ and $n_1 + \cdots + n_{k+1} = n$.
But then $D_{k+1}^{1/2}S_{k+1}D_{k+1}^{-1/2}$ cannot have 
spectral radius less than 1. So, $S_{k+1}$ must be 
vacuous.

Next, we show that each $D_j$ is a scalar matrix. To this end,
note that we can construct a null vector 
of $PA_0P^T$ by extending a Perron vector $u_1 \in \IR^{n_1}$ 
of $\tilde S_1$ to a vector $v_1$ in $\IR^n$ by adding $n-n_1$ zeros.
Then $PA_0P^Tv_1 = 0$. By (2), we see that $v_1^TPA_0P^T = 0$.
It follows that
$D_1^{1/2}\tilde S_1 D_1^{-1/2} u_1 = u_1$ and 
$u_1^T D_1^{1/2} \tilde S_1 D_1^{-1/2} = u_1^T$.
Equivalently, 
$\tilde S_1 D_1^{-1/2}u_1 = D_1^{-1/2}u_1$ and 
$u_1^T D_1^{1/2} \tilde S_1 = u_1^T D_1^{1/2}$.
Since $\tilde S_1$ is symmetric, nonnegative, and irreducible,
the eigenspace of the Perron eigenvalue is one dimensional
and there is a positive vector $w$ such that 
$\tilde S_1 w = w$ and $w^T \tilde S_1 = w^T$.
Hence $D_1^{-1/2}u_1$ is a multiple of $w$ and $u_1^TD_1^{1/2}$
is a multiple of $w^T$. Hence $D_1$ is a scalar matrix.
Similarly, one can prove that $D_2, \dots, D_k$ are scalar
matrices.  Since $P\tilde SP^T = \tilde S_1 \oplus \cdots \oplus \tilde S_k$,
we have $PSP^T = S_1 \oplus \cdots \oplus S_k$.

Conversely, if $S$ and $D$ commute, then
$F(t) = (tI + (1-t)S)(t I + (1-t)S)$ is column
stochastic for all $t \in [0, 1]$.
So, $\varrho(F(t)) = 1$ for all $t \in [0,1]$.
\qed

\medskip\noindent
{\bf Remark}
Note that the conclusion of Theorem \ref{thm:1} fails if $S$ is not 
diagonally similar to 
a symmetric matrix. For example, if $S = \begin{pmatrix}0 & 1 & 0 \cr 0 & 0 & 1\cr 1 & 0 & 0 \end{pmatrix}\oplus 
I_{n-3}$, and $D = \diag(3,2,1) \oplus I_{n-3}$, then $\varrho(F(0)) = \varrho(F(1)) = 1$, but 
$\varrho(F(1/2)) > 1$.
One may perturb $S$ slightly, say, replacing it by 
$(1-\varepsilon)S +  \varepsilon J$ for a small $\varepsilon > 0$
so that $\varrho(F(t))$ is not monotone, where $J$ is the matrix with all 
entries equal to $1/n$.

\textbf{Acknowledgements.}  We thank two anonymous reviewers and Fabio Dercole for their extensive and helpful comments. The research of the first author was partially supported by U.S. National Science Foundation Grants DMS-0517987 and EF-0928987. The research of the second author was partially
   supported by the U.S. National Science Foundation and the William and Mary Plumeri
   Award. He is an honorary professor of the
   University of Hong Kong and an honorary
   professor of the Taiyuan University of
   Technology.

\bibliography{../../seb}

\end{document}